\newcommand{\columnsversion}[2]{#1}{} %one-column version
\newcommand{\comment}[1]{}
\newcommand{\abs}[1]{\vert{#1}\vert}
\newtheorem{theorem}{Theorem}[section]
\newtheorem{lemma}[theorem]{Lemma}
\newtheorem{metalemma}[theorem]{Meta-Lemma}
\newtheorem{prop}[theorem]{Proposition}
\newtheorem{cor}[theorem]{Corollary}
\newtheorem{example}[theorem]{Example}
\newtheorem{defn}[theorem]{Definition}
\newtheorem{metalemma}[@theorem]{Meta-Lemma}
\newtheorem{defn}[@theorem]{Definition}
\newtheorem{cor}[@theorem]{Corollary}
\newtheorem{prop}[@theorem]{Proposition}
\newenvironment{proof}{\bg{Proof : }}{\ed}
\newenvironment{warmup}{\bg{Warm Up: }}{\ed}
\def\squareforqed{\hbox{\rule{2.5mm}{2.5mm}}}
\def\QED{\ifmmode\squareforqed % in mathmode : print just the square
  \else{\nobreak\hfil   % \hfil to end of current line
    \penalty50                 % penalty 50 for breaking the line here
    \hskip1em                  % leave at least 1em before the square
    \null                      % \hbox{}
    \nobreak                   % prohibit line break
    \hfil                      % another \hfil (if a break occurred)
    \squareforqed              % put the square here
    \parfillskip=0pt           % the line really ends here
    \finalhyphendemerits=0     % ignore a hyphen on the line above
    \endgraf}                  % end the paragraph
  \fi}
\def\blksquare{\rule{2mm}{2mm}}
\def\qedsymbol{\blksquare}
\newcommand{\bg}[1]{\medskip\noindent{\bf #1}}
\newcommand{\ed}{{\hfill\qedsymbol}\medskip}
\newenvironment{warmup}{{\it Warm Up: }}{\ed}
\newenvironment{proofof}[1]{{\it{Proof of #1 : }}}{\ed}
\newcommand{\R}{\ensuremath{\mathbb R}}
\newcommand{\F}{\ensuremath{\mathcal F}}
\newcommand{\email}[1]{\texttt{#1}}
\newcommand{\base}{\mathtt{b}}
\newcommand{\aug}{\mathtt{a}}
\begin{document}

\columnsversion{
\setcounter{page}{0}
}{}

\title{Clinching Auctions with Online Supply}
\columnsversion{
\author{
Gagan Goel\\
       Google Inc., New York\\
       \email{gagangoel@google.com}
% 2nd. author
\and
Vahab Mirrokni\\
       Google Inc., New York\\
       \email{mirrokni@google.com}
% 2nd. author
\and
Renato Paes Leme\\
       Cornell University\\
       \email{renatoppl@cs.cornell.edu}
}}{
\author{Gagan Goel\thanks{ {\tt gagangoel@google.com}, Google Inc., New York.}
\\ \and
Vahab Mirrokni\thanks{ {\tt mirrokni@google.com}, Google Inc., New York } \\
\and
Renato Paes Leme\thanks{ {\tt renatoppl@cs.cornell.edu}, Dept of Computer
Science, Cornell University. This work was done while the author was a summer
intern at Google NYC. During the rest of the academic year, he is supported by
a Microsoft Research Fellowship.}}}
\date{}

\maketitle

\begin{abstract}
Auctions for perishable goods such as internet ad inventory need to
make real-time allocation and pricing decisions as the supply of the
good arrives in an online manner, without knowing the entire supply in
advance. These allocation and pricing decisions get complicated when
buyers have some global constraints. In this work, we consider a
multi-unit model where buyers have global {\em budget} constraints,
and the supply arrives in an online manner. Our main contribution is
to show that for this setting there is an individually-rational,
incentive-compatible and Pareto-optimal auction that allocates these
units and calculates prices on the fly, without knowledge of the total
supply. We do so by showing that the Adaptive Clinching Auction
satisfies a {\em supply-monotonicity} property.

We also analyze and discuss, using examples, how the insights gained
by the allocation and payment rule can be applied to design better ad
allocation heuristics in practice. Finally, while our main technical
result concerns multi-unit supply, we propose a formal model of online
supply that captures scenarios beyond multi-unit supply and has
applications to sponsored search. We conjecture that our results for
multi-unit auctions can be extended to these more general models.
\end{abstract}

\columnsversion{\newpage}{}

\section{Introduction}
The problem of selling advertisement on the web is essentially an online
problem - the supply (pageviews) arrives dynamically and decisions on how to
allocate ads to pageviews and price these ads need to be taken instantaneously,
without full knowledge of the future supply. What makes these decisions complex
is the fact that buyers have budget constraints, which ties the allocation and
pricing decisions across different time steps. Another complicating feature of the
 online advertisement markets is that buyers are strategic and can misreport their values to
their own advantage.

These observations have sparkled a fruitful line of research in two different
directions. First is that of designing online algorithms where one assumes that
the supply is coming online, but makes a simplifying assumption that buyers are
non-strategic. This line of research has led to novel tools and techniques in
the design of online algorithms (see for example ~\cite{jacmMehtaSVV07,BJN07, DH09, AggarwalGKM11}). 
The second line of
research considers the design of incentive-compatible mechanisms assuming that
buyers are strategic, but makes an assumption that the supply is known
beforehand. Handling budget constraints using truthful mechanisms is
non-trivial since standard VCG-like techniques fail when the player utilities
are not quasi-linear. In a seminal work, Dobzinski, Lavi and Nisan
\cite{dobzinski12} showed that one can adapt Ausubel's clinching auction
\cite{Ausubel_multi} to achieve Pareto-optimal outcomes for the case of
multi-unit supply. In settings with budget constraints, the goal of maximizing
social welfare is unattainable and efficiency is achieved through
Pareto-optimal outcomes. In fact, if budgets are sufficiently large,
Pareto-optimal outcomes are exactly the ones that maximize social welfare
\cite{goel12}.

From a practical standpoint, it is important to understand what can be done when
both the above scenarios are present at the same time. Motivated by this, we
study the following question in this paper:
{\em Can one design efficient incentive-compatible mechanisms for the
case when agents have budget constraints and the supply arrives online?}

A closely related question was studied by Babaioff, Blumrosen and Roth
\cite{babaioff10}, who asked weather it was possible to obtain efficient
incentive-compatible mechanisms with online supply, but instead of budget
constraints, they considered capacity constraints, i.e., each agent wants at
most $k$ items (capacity) rather than having at most $B$ dollars to spend
(budget). They showed that no such mechanism can be efficient and proved
lower bounds on the efficiency that could be achieved.

Such lower bounds seem to offer a grim perspective on what can be done with
budget constraints, since typically, budget constraints are less
well-behaved than capacity constraints. On the contrary, and somewhat
surprisingly, we show that for budget constraints it is
possible to obtain incentive compatible and Pareto-optimal auctions that
allocate and charge for items as they arrive, by showing that the Adaptive
Clinching Auction in  \cite{dobzinski12} for multi-unit supply can be implemented in an online
manner. More formally, we show that the clinching auction for the multi-unit
supply case satisfies the following {\em supply-monotonicity} property: Given
the allocation and payments obtained by running the auction for initial supply
$s$, one can obtain the allocation and payments for any other supply $s' \geq s$
by {\em augmenting} to the auction outcome for supply $s$. In other words,
it is possible to find an allocation for the extra $s'-s$ items and extra
(non-negative) payments such that when added to clinching auction outcome for
the supply $s$, we obtain the clinching auction outcome for supply $s'$.
Moreover, we show that each agent's utility is also monotone with respect to the
supply, i.e., agents do not have incentive to leave the auction prematurely. 

From a technical perspective, proving the above result requires a deeper
understanding of the structure of the clinching auction, which in general is
difficult to analyze because it is described using a differential ascending
price procedure rather than a one-shot outcome like VCG. In order to do so, we
study the description of the clinching auction given by Bhattacharya, Conitzer,
Munagala and Xia \cite{Bhattacharya10} by means of a differential equation. At
its heart, the proof of the supply monotonicity is a coupling argument. We
analyze two parallel differential procedures whose limits correspond to the
outcome of the clinching auction with the same values and budgets but different
initial supplies. We prove that either one stays ahead of the other or they meet
and from this point on they evolve identically (for carefully chosen concepts
of `stay ahead' and `meet'). We identify many different invariants in the
differential description of the auction, and use tools from real analysis to
show that these invariants hold.\columnsversion{}{\\}

\paragraph{Towards better heuristics for ad-allocation.}
One of the main goals of this research program is to provide insights for the
design of better heuristics to deal with budget-constrained agents in
real ad auctions. Most heuristics in practice are based on bid-throttling or
bid-lowering. Bid-throttling probabilistically removes a player from the
auction based on her spent budget (throttling). Bid-lowering runs a standard second
price auction with modified bids. While sound from an algorithmic perspective,
bid-throttling and bid-lowering are not integrated with the underlying auction from the perspective of
incentives.
We believe clinching auctions provide better insights into designing heuristics that are
more robust to strategic behavior. Towards this goal, we analyze the online
allocation rule obtained from the clinching auction and provide a qualitative
description of how allocation and payments evolve when new items arrive, and show how this description is significantly different from the bid-throttling or bid-lowering heuristics that are applied in practice.
In clinching, the fact that an agent got many
items for an expensive price in the beginning (once the items were scarce) gives
him an advantage over items in the future if/once they become abundant, i.e.,
this agent will have the possibility of acquiring these items for a lower price
than the other agents.\columnsversion{}{\\}

\paragraph{Online supply beyond Multi Unit Auctions.}
Since the groundbreaking work of Dobzinski, Lavi and Nisan \cite{dobzinski12},
clinching auctions have been extended beyond multi-unit auctions in the
offline-supply setting: first to matching markets by Fiat, Leonardi, Saia
and Sankowski \cite{fiat_clinching}, then to sponsored search setting by
Colini-Baldeschi, Henzinger, Leonardi and Starnberger \cite{henzinger11} and
to general polymatroids by Goel, Mirrokni and Paes Leme \cite{goel12}. It is
natural to ask if such offline-auctions have online supply counterparts.

We leave this question as the main open problem in this paper. However, before one formulates this question, we need to define what we mean by
online supply in such settings. One contribution of this paper is to define a
{\em model of online supply} for allocation constraints beyond multi-units.
Using our definition of online supply for generic constraints and the
definition of AdWords Polytope in \cite{goel12}, we can capture for example:
(1) sponsored search with multiple slots, where for each pageview we need to
decide on an allocation of agents to slots and (ii) matching constraints: each
arriving pageview can only be allocated to a subset of
advertisers.\columnsversion{}{\\}

\paragraph{Related Work.}

The study of auctions with online supply was initiated in Mahdian and Saberi
\cite{Mahdian2006} who study multi-unit auctions with the objective of
maximizing revenue. They provide a constant competitive auction with the
optimal offline single-price revenue. Devanur and Hartline \cite{devanur09}
study this problem in both the Bayesian and prior-free model. In the Bayesian model,
they argue that there is no separation between the online and offline problem.
This discussion is then extended to the prior-free setting. The results in
\cite{devanur09} assume that the payments can be deferred until all supply is
realized, while allocation needs to be done online.

Our work is more closely related to the work by Babaioff, Blumrosen and Roth
\cite{babaioff10}, which study the online supply model with the goal of
maximizing social welfare. Unlike previous work, they insist (as we also do)
that payments are charged in an online manner.  
This is a desirable property from a practical standpoint, since it
allows players to monitor their spend in real-time.
Their results are mainly
negative: they prove lower bounds on the approximability of social welfare in
setting where the supply is online. Efficiency is only recovered when stochastic
information on the supply is available.

We should also note that there is a long line of research at the intersection
of online algorithms and mechanism design, mostly dealing with agents arriving
and departing in an online manner. We refer to Parkes \cite{parkes07a} for a
survey.

Another stream of related works comes from the literature on mechanism design
with budget constrained agent. This line of inquire was initiated in Dobzinski
et al \cite{dobzinski12}, who proposed a mechanism based on Ausubel's celebrated
clinching framework \cite{Ausubel_multi}. The authors propose a mechanism for
multi-unit auctions and indivisible goods and a mechanism for $2$ players and
divisible goods called the {\em Adaptive Clinching Auction}. The mechanism is
extended to $n$ players by Bhattacharya et al \cite{Bhattacharya10} by means of
a differential ascending process whose limit is the allocation and payments of
the Adaptive Clinching Auction. The authors also show how to use
randomization to enable the auction to handle private budgets. Many subsequent
papers deal with extending the clinching auctions to more general environments
beyond multi-units: Fiat el al \cite{fiat_clinching}, Colini-Baldeschi et al
\cite{henzinger11} and Goel et al \cite{goel12}.

\section{Preliminary Definitions}

An auction is defined by a set of $n$ {\em players} equipped with utility
functions and an {\em environment}, which specifies the set of feasible
allocations. Formally, we consider a divisible\footnote{our decision of
considering divisible goods is motivated by our application. In sponsored
search, the number of items (pageviews) arriving at each time is enormous,
making fractional allocations essentially feasible.} good $g$ such
that an allocation of this good will be represented by a vector $x = (x_1,
\hdots, x_n)$, meaning that player $i$ got $x_i$ units of the good. Player $i$
has a set of private types $\Theta_i$ and his utility function will depend on
his type $\theta_i$, the amount $x_i$ he is allocated and the amount of money
$\pi_i$ he is charged for it. We will represent it by a function $u_i(\theta_i,
x_i, \pi_i)$. Moreover, we will consider a set $P \subseteq \R_+^n$ that
specifies the set of feasible allocations. We will call such set the
{\em environment}.

\columnsversion{ \begin{defn} }{ \begin{Definition}}
An auction for this setting
consists of two mappings: the allocation
$x:\times_i \Theta_i \rightarrow P$ and the payment $\pi:\times_i \Theta_i
\rightarrow \R_+^n$. The auction is said to be individually rational if
$u_i(\theta_i, x_i(\theta), \pi_i(\theta)) \geq 0$ for all type vectors $\theta
= (\theta_1, \hdots, \theta_n)$. The auction is said to be incentive-compatible
(a.k.a. truthful) if no player can improve his utility by misreporting his
type, i.e.\columnsversion{:}{, for all $\theta'_i, \theta_i \in \Theta_i$ and
all $i \in [n]$:}
\columnsversion{$$ u_i(\theta_i, x_i(\theta_i, \theta_{-i}), \pi_i(\theta_i,
\theta_{-i})) \geq
u_i(\theta_i, x_i(\theta'_i, \theta_{-i}), \pi_i(\theta'_i, \theta_{-i})),
\forall \theta'_i, \theta_i \in \Theta_i, \forall i $$ }{$$ u_i(\theta_i,
x_i(\theta), \pi_i(\theta)) \geq
u_i(\theta_i, x_i(\theta'_i, \theta_{-i}), \pi_i(\theta'_i, \theta_{-i})) $$}
\columnsversion{ \end{defn} }{ \end{Definition}}
In this paper, we will be particularly interested in agents with
budget-constrained utility functions. For this setting $\Theta_i = \R_+$
representing the value $v_i$ the agent has for one unit of the good. There is a
{\em public} budget $B_i$ such that: $u_i(v_i, x_i, \pi_i) = v_i \cdot x_i -
\pi_i$ if $\pi_i\leq B_i$ and $-\infty$ otherwise. Incentive compatibility for
this setting means that the agents don't have incentives to misreport their
value. For this setting, we are interested in auctions producing Pareto-optimal
outcomes:

\columnsversion{ \begin{defn} }{ \begin{Definition}}
 Given an auction with environment $P$ and agents equipped with
budget-constrained utility functions, $B_i$ being the public budgets, we say
that an outcome $(x,\pi), x \in P, \pi \leq B$ is {\em Pareto optimal} if there
is no alternative outcome $x' \in P, \pi' \leq B$ such that $v_i x'_i - \pi'_i
\geq v_i x_i - \pi_i$ for all $i$, $\sum_i \pi'_i \geq \sum_i \pi_i$ and at
least one of those inequalities is strict. 
\columnsversion{ \end{defn} }{ \end{Definition}}

For the remainder of this paper, we will assume that budgets are public for
simplicity. For multi-unit auctions (i.e. $P = \{x; \sum_i x_i \leq s \}$), our
results extend to private budgets by applying the budget elicitation schemes in
Bhattacharya el al \cite{Bhattacharya10}.
\section{Online Supply Model}\label{sec:online_supply_model}

We consider auctions where the feasibility set is not known in advance to the
auctioneer. For each time $t \in \{0,\hdots, T\}$, we associate an environment
$P_t \subseteq \R_+^n$, which keeps track of the allocations done in times $t'
= 1..t$ \footnote{We want to stress the fact that $P_t$ {\em doesn't} represent
the set of allocations allowed {\em at} time $t$, but the set of allocations
allowed {\em until} time $t$. The set of new possible allocations in time $t$
is the difference between $P_t$ and $P_{t-1}$}.
  In each step, the mechanism needs to output an
allocation vector $x^t = (x_1^t, \hdots, x_n^t) \in P_t$ and a payment
vector $\pi^t = (\pi_1^t, \hdots, \pi_n^t) \geq 0$ by augmenting $x^{t-1}$ and
$\pi^{t-1}$. Given a set of desirable
properties, we would like to maintain them for all $t$. To make the problem
tractable, have to restrict the set of possible histories
$\{P_t\}_{t\geq 0}$. We do so by defining a partial ordering $\preccurlyeq$
on the set of feasibility constraints such that  if $t \leq s$ then $P_t
\preccurlyeq P_s$.

Our main goal is to design auctions where the auctioneer can allocate and charge
payments {\em `on the fly'}. The auctioneer will face a set of
environments $P_1 \preccurlyeq P_2 \preccurlyeq \hdots \preccurlyeq P_t$ and at
time $t$, he needs to allocate $x^t \in P_t$ and charge $\pi^t$, maintaining a
set of desirable properties. He doesn't know if $P_t$ will be the final outcome,
or if some new environment $P_{t+1} 
\succcurlyeq P_t$ will arrive, in which case he will need to augment $x^t \in
P_t$ to an allocation $x^{t+1}$ in $P_{t+1}$. It is crucial that his decision
at time $t$ doesn't depend the knowledge about $P_{t+1}$.

\columnsversion{ \begin{defn}[Online Supply Model] }{ \begin{Definition}[Online
Supply Model] }
Consider  a family of feasibility allocation constraints
indexed by $\F$, i.e, for each $f \in \F$ associate a set of feasible
allocation vectors $P^f \subseteq \R^n_+$ (a set $P^f$ is often called
environment). Also, consider a partial order $\preccurlyeq$ defined over $\F$
such that if $f \preccurlyeq f'$ then $P^f \subseteq P^{f'}$.
An auction for environment $P^f$ consists of functions $x^f: \Theta =
\times_i \Theta_i \rightarrow P^f$ and $\pi^f : \Theta \rightarrow \R^n_+$. 

An \textbf{auction in the strong online supply model} for 
 $(\F, \preccurlyeq)$ is a family of auctions such that $x^f
\leq x^{f'}$ and $\pi^f \leq \pi^{f'}$ whenever $f \preccurlyeq f'$.
Moreover, we say that the auction satisfies a certain property if it satisfies
this property for each $f$ (e.g. the auction is incentive compatible if for
each $f \in \F$, $(x^f, \pi^f)$ is an incentive compatible auction).

An \textbf{auction in the weak online supply model} for $(\F, \preccurlyeq)$ is
essentially the same, except that we drop the requirement of $\pi^f \leq
\pi^{f'}$. The intuition is that we are required to allocate goods online, but
are allowed to charge payments only in the end.
\columnsversion{ \end{defn} }{ \end{Definition}}

The main idea behind the definition is that if at some point the auctioneer
runs the auction $(x^f, \pi^f)$ for some environment $P^f$ and at a later time
some more goods arrive perhaps with new constraints such that the environment is
augmented to $P^{f'}$ with $f' \succcurlyeq f$, then the auctioneer can run
$(x^{f'}, \pi^{f'})$ and augment the allocation of player $i$ by $x_i^{f'}(v) -
x_i^f(v)$ goods and charge him more $\pi_i^{f'}(v) - \pi_i^f(v)$. 

\begin{example}[Multi-unit auctions]
 Let $\Delta_s = \{x \in \R^n_+; \sum_i
x_i \leq s\}$ and define
$\F^{\text{MU}} = \{\Delta_s; s \geq 0\}$ and let $\Delta_s
\preccurlyeq^{\text{MU}} \Delta_t$ iff $s \leq t$. Let the value of player $i$ for one unit of the good, $v_i$, lies in $\Theta_i = \R_+$. Now $u_i =
v_i x_i - \pi_i$. Thus, we are in a simple multi-unit auction setting. In this
setting, VCG is incentive compatible, individually rational and efficient (in
the sense that it has those three properties once run for each $\Delta_s$)
auction in the strong online model for $(\F^{\text{MU}}
, \preccurlyeq^{\text{MU}})$.
\end{example}

\columnsversion{
\begin{example}[Multi-unit auctions with 
capacities]\label{example:capacities}}{
\begin{example}[Multi-unit auctions with 
capacities]\label{example:capacities} }
Curiously, if players have capacity constraints, i.e., their utilities are $u_i
= v_i \min\{x_i, C_i\} - \pi_i$, then the VCG allocations for
$(\F^{\text{MU}} , \preccurlyeq^{\text{MU}})$ are still
monotone in the supply, but the payments are not. For example, consider two
agents with values $v_1 = 1, v_2 = 2$ and capacities $C_1 = C_2 = 1$. With
supply $1$, one item is allocated to player $2$ and he is charged $1$. With
supply $2$, both players get one unit of the item, but the VCG prices are zero.
Therefore, there is no incentive compatible, individually rational and
efficient in the strong online model. Babaioff, Blumrosen and Roth \cite{
babaioff10} strengthen this result showing that no $\Omega(\log \log
n)$-approximately efficient auction exists in the strong online model.
\end{example}

\begin{example}[Polymatroidal auctions]
Now, let $\F^{\text{PM}}$ be the set of all polymatroidal domains and consider
the naive-partial-order  $\preccurlyeq^{\text{N}}$ to be such that $f
\preccurlyeq f'$ iff $P^{f} \subseteq P^{f'}$. The VCG is not even online in
the weak sense for $(\F^{\text{PM}}, \preccurlyeq^{\text{N}})$.
Consider the following example:
\columnsversion{
$$P^{f} = \{x \in \R^2_+; x_1 \leq 2, x_2 \leq 2, x_1 + x_2 \leq 3\} 
\quad \text{   and   } \quad
P^{f'}
= \{x \in \R^2_+; x_1 + x_2 \leq 4\}$$}{$$P^{f} = \{x \in \R^2_+; x_1 \leq 2,
x_2 \leq 2, x_1 + x_2 \leq 3\}$$
$$P^{f'} = \{x \in \R^2_+; x_1 + x_2 \leq 4\}$$}
then clearly $f \preccurlyeq f'$ but if $v_1 > v_2$. $x^{f}(v) = (2,1)$ but
$x^{f'} = (4,0)$ violating the monotonicity property. But now, let's define a
different partial order $\preccurlyeq^{\text{PM}}$ such that
$f \preccurlyeq^{\text{PM}} f'$ if there is a polymatroid $P'$ such that
$P^{f'} = P^{f} + P'$ where the sum is the Minkowski sum. In Lemma
\ref{lemma:polymatroidal_online_supply} (Appendix \ref{appendix:missing_proofs})
we show that VCG
is an auction in the strong online model for
$(\F^{\text{PM}}, \preccurlyeq^{\text{PM}})$.
\end{example}

One interesting property of incentive-compatible auctions in the online supply
model is that utilities are monotone with the supply. If bidders have the option
of leaving in each timestep collecting their current allocations for their
current payment, they still (weakly) prefer to stay until the end of the
auction.

\begin{lemma}[Utility monotonicity]\label{lemma:utility-monotonicity}
Consider a setting where agents have single-parameter valuations $\Theta_i =
\R_+$ and quasilinear utilities $u_i = v_i x_i - \pi_i$.
Given a truthful auction in the weak online supply model and $f \preccurlyeq
f'$, then the utility of agent $i$ increased with the supply, i.e.:
$u^{f'} = v_i x^{f'} - \pi^{f'}_i \geq  v_i x^{f} - \pi^{f}_i = u^{f} $
\end{lemma}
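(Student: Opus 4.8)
The plan is to reduce the statement to Myerson's characterization of truthful single-parameter mechanisms, applied separately to the two auctions $(x^f,\pi^f)$ and $(x^{f'},\pi^{f'})$, and then to feed in the one extra ingredient the online supply model provides: supply-monotonicity of the allocation. Throughout I fix agent $i$ and the (truthful) reports $v_{-i}$ of the other agents, and view $x_i^f(\cdot,v_{-i})$, $\pi_i^f(\cdot,v_{-i})$ and the resulting utility curve $U_i^f(z):=z\,x_i^f(z,v_{-i})-\pi_i^f(z,v_{-i})$ as functions of agent $i$'s reported value $z\ge 0$; likewise for $f'$. Since $(x^f,\pi^f)$ is incentive compatible, $U_i^f(z)=\sup_w\big(z\,x_i^f(w,v_{-i})-\pi_i^f(w,v_{-i})\big)$ is a pointwise supremum of affine functions of $z$ with nonnegative slopes, hence convex and nondecreasing, with $(U_i^f)'(z)=x_i^f(z,v_{-i})$ wherever it is differentiable; integrating, $U_i^f(z)=U_i^f(0)+\int_0^z x_i^f(t,v_{-i})\,dt$, and the identical formula holds for $f'$.

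Next I would subtract the two envelope identities at $z=v_i$, obtaining
$$U_i^{f'}(v_i)-U_i^f(v_i)=\big(U_i^{f'}(0)-U_i^f(0)\big)+\int_0^{v_i}\big(x_i^{f'}(t,v_{-i})-x_i^f(t,v_{-i})\big)\,dt .$$
By definition of the weak online supply model we have $x^f\le x^{f'}$ coordinatewise on all of $\Theta$, so the integrand is nonnegative and the integral term is $\ge 0$. Hence $U_i^{f'}(v_i)-U_i^f(v_i)\ \ge\ U_i^{f'}(0)-U_i^f(0)=\pi_i^f(0,v_{-i})-\pi_i^{f'}(0,v_{-i})$, and the whole lemma comes down to showing this last boundary term is nonnegative.

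For the boundary term I would invoke nonnegativity of payments together with individual rationality: at value $0$ the agent's utility is $U_i^{f'}(0)=-\pi_i^{f'}(0,v_{-i})$, so $0\le U_i^{f'}(0)\le 0$ forces $\pi_i^{f'}(0,v_{-i})=0$, while $\pi_i^f(0,v_{-i})\ge 0$ is immediate; thus $\pi_i^f(0,v_{-i})-\pi_i^{f'}(0,v_{-i})\ge 0$ and we conclude $U_i^{f'}(v_i)\ge U_i^f(v_i)$, which unwinds to $v_i x_i^{f'}-\pi_i^{f'}\ge v_i x_i^f-\pi_i^f$ as claimed. I expect this last step to be the only real obstacle: in the weak model payments are explicitly not assumed monotone in the supply, so without some normalization — here the mild fact, automatic from individual rationality and nonnegative payments, that a zero-value agent pays nothing — the boundary term could genuinely have the wrong sign, and with it the lemma. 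Everything else is bookkeeping around Myerson's lemma plus the allocation monotonicity that is hard-wired into the model.
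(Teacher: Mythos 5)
Your proposal is correct and follows essentially the same route as the paper's one-line proof, which simply invokes Myerson's payment identity $u^f=\int_0^{v_i} x_i^f(t)\,dt$ together with $x^f\le x^{f'}$. The only difference is that you spell out the boundary normalization $\pi_i(0,v_{-i})=0$ that the paper's use of the integral formula silently assumes (and note, in fact, that IR plus nonnegative payments applies to both $f$ and $f'$, so the boundary term is exactly zero, not merely nonnegative) — a useful clarification but not a distinct argument.
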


\begin{proof}
 The proof follows directly from Myerson's characterization \cite{myerson-81} of
payments in quasi-linear settings: \columnsversion{$u^{f'} = v_i x^{f'}_i -
\pi^{f'}_i =
\int_0^{v_i} x^{f'}_i(u) du \geq
\int_0^{v_i} x^{f}(u) du = v_i x^{f} - \pi^{f}_i = u^f $.}{$$
\begin{aligned}
u^{f'} & = v_i x^{f'}_i - \pi^{f'}_i = \int_0^{v_i} x^{f'}_i(u) du 
\geq \int_0^{v_i} x^{f}(u) du = \\ & = v_i x^{f} - \pi^{f}_i = u^f
\end{aligned}$$}
\end{proof}

\section{Clinching Auctions and Supply
Monotonicity}\label{sec:multi-unit-auctions}

Our main theorem states that the Adaptive Clinching Auction (defined in
Dobzinski, Lavi and Nisan \cite{dobzinski12} and Bhattacharya el al
\cite{Bhattacharya10}) is an incentive-compatible auction in the strong online
supply model for budget constrained agents in the multi-unit setting. Formally:

\begin{theorem}\label{thm:main_theorem_multi_units}
 Given $n$ agents with public budgets $B_i$ and single-dimensional types $v_i
\in \R_+$ such that their utility is given by $u_i = v_i x_i - \pi_i$ if $\pi_i
\leq B_i$ and $u_i = -\infty$ otherwise, the Adaptive Clinching Auction is
an auction in the strong online supply model for $(\F^{\text{MU}} ,
\preccurlyeq^{\text{MU}})$. In other words, if $x(v,B,s)$ and $\pi(v,B,s)$ is
the outcome of the auction for valuation profile $v$, budgets $B$ and supply
$s$, then if $s \leq s'$, then: $x(v,B,s) \leq x(v,B,s')$ and $\pi(v,B,s) \leq
\pi(v,B,s')$.
\end{theorem}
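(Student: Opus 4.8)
The plan is to work with the differential ``clock auction'' description of the Adaptive Clinching Auction due to Bhattacharya et al.~\cite{Bhattacharya10}. There a price $p$ rises continuously from $0$; at price $p$ an agent $i$ is \emph{active} if $v_i > p$, its remaining budget is $b_i(p) = B_i - \pi_i(p)$ and its demand is $b_i(p)/p$; agent $i$ \emph{clinches} additional units exactly when the residual supply $S(p) = s - \sum_j x_j(p)$ has dropped to the total demand $\sum_{j \ne i,\ \mathrm{active}} b_j(p)/p$ of the other active agents, and each unit clinched at clock price $p$ is charged at price $p$, so $d\pi_i = p\, dx_i$; the auction outcome is the state as $p \to \max_i v_i$. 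The key observation is that the clock $p$ and the active set $\{i : v_i > p\}$ depend only on $v$, not on $s$, so one can \emph{couple} the run with supply $s$ (the ``base'' run, with quantities $x^\base,\pi^\base,b^\base,S^\base$) with the run with supply $s' \ge s$ (the ``augmented'' run) through the common clock, and aim to prove the pointwise invariants
\[
x_i^\aug(p) \ge x_i^\base(p)
\qquad\text{and}\qquad
\pi_i^\aug(p) \ge \pi_i^\base(p) \ \ \big(\text{equivalently } b_i^\aug(p) \le b_i^\base(p)\big)
\]
for all $i$ and all $p$. Letting $p$ reach its terminal value $\max_i v_i$ then gives exactly $x(v,B,s) \le x(v,B,s')$ and $\pi(v,B,s) \le \pi(v,B,s')$.

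Both invariants hold at $p = 0^+$, where nothing has been allocated or charged and $S^\aug = s' \ge s = S^\base$. I would establish that they persist by an invariant-region argument on the coupled trajectory: supposing both hold up to a price $p_0$ at which one of them first becomes an equality for some agent (say $x_i^\aug(p_0) = x_i^\base(p_0)$ or $\pi_i^\aug(p_0) = \pi_i^\base(p_0)$), I would compare the right-hand derivatives of the two processes and check that the coupled vector field points back into the region $\{x^\aug \ge x^\base,\ \pi^\aug \ge \pi^\base\}$. Economically the two inequalities reinforce one another: more total supply means (weakly) more residual supply at the instant clinching begins, so every agent starts clinching at a (weakly) smaller price and has clinched (weakly) more at every price; having clinched more by (weakly) cheaper prices, an agent has (weakly) less remaining budget and hence (weakly) lower demand, which in turn forces the \emph{other} active agents to clinch even more. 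Turning this into a rigorous proof needs a handful of auxiliary invariants --- comparisons of the residual supplies and of the aggregate active demand, together with the fact that the phase transitions ``agent $i$ begins clinching'' and ``agent $i$ exhausts its budget'' occur in the augmented run at (weakly) smaller prices than in the base run --- after which one chains Gr\"onwall-type differential inequalities on each maximal interval between consecutive transition prices.

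The step I expect to be the main obstacle is precisely this case analysis at the phase transitions, where the ODEs governing $x$ and $\pi$ are only piecewise defined and the coupled vector field is discontinuous. Across the two runs a given agent may sit in different phases --- not yet clinching, clinching, or budget-exhausted --- and every such configuration has to be checked; the delicate windows are those in which an agent is clinching (or has exhausted its budget) in the augmented run but not yet in the base run, so that the two processes obey genuinely different equations. Here I would use the ``stay ahead or meet'' dichotomy: on each segment either the augmented run stays strictly ahead in the relevant invariant, which one controls by a one-sided differential inequality, or the two runs reach a common value of the driving data --- residual supply together with every remaining budget --- at some price $p_1$, and then uniqueness for the governing differential system forces $x^\aug \equiv x^\base$ and $\pi^\aug \equiv \pi^\base$ for $p \ge p_1$, so the invariants hold with equality from $p_1$ on. Patching these segments together up to $p = \max_i v_i$ finishes the proof, and Lemma~\ref{lemma:utility-monotonicity} then yields the additional conclusion that no agent benefits from leaving the auction early.
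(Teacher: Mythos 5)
Your plan matches the paper's proof in its essential architecture: couple the base and augmented runs through the common price clock, prove pointwise monotonicity invariants by a phase-by-phase argument, and use a ``stay ahead or become fully coupled'' dichotomy to handle the case where the two trajectories touch. You also correctly identify the phase-transition prices $v_j$, where the vector field is discontinuous, as the delicate points.

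Two technical devices the paper relies on, which your plan leaves as obstacles to be surmounted, are worth noting since they are what make the argument tractable. First, rather than proving $x_i^\aug(p)\ge x_i^\base(p)$ directly by coupled ODEs (which is awkward because $x_i$ jumps at the points $v_j$), the paper introduces the \emph{wishful allocation} $\Psi_i(p)=x_i(p)+B_i(p)/p$, shows it is everywhere continuous with $\partial_p\Psi_i(p)=-B_i(p)/p^2$, and derives allocation monotonicity from budget monotonicity by a one-line integral comparison; this decouples the two invariants and sidesteps the discontinuities of $x$. Second, the budget-monotonicity invariant $B_i^\aug(p)\le B_i^\base(p)$ is not tracked coordinate-by-coordinate: the structural fact that $B_i(p)=\min\{B_i(0),B_*(p)\}$ for every active $i$ (Lemma \ref{lemma:budget_profile_format_pre} and Corollary \ref{cor:budget_profile_format}) reduces everything to the single scalar comparison $B_*^\aug(p)\le B_*^\base(p)$, and equality of the two maxima at some price forces equality of all remaining budgets and of the residual supplies, which is precisely the ``fully coupled'' state. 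With those two lemmas in hand, the Gr\"onwall-style case analysis you sketch is replaced by the Meta-Lemma \ref{the_metalemma} pattern (local right-continuation plus left-limit at breakpoints), and the proof closes cleanly. So: correct plan, same route; the missing pieces are these two structural lemmas rather than any change of strategy.
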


Notice that this is in sharp contrast with what happens in Example
\ref{example:capacities} where getting a Pareto optimal auction in the strong
online model is not possible, not even in an approximate way \footnote{for that
setting, since there are no budgets, Pareto optimality boils down to
efficiency.}. This is somewhat surprising, since capacity constraints on the
allocations are usually more nicely-behaved compared to budget constraints.

Before proving the result, we review the Adaptive Clinching Auction
\cite{Bhattacharya10, dobzinski12},
presenting it in a way which will be more convenient for the proof.

\subsection{Adaptive Clinching Auction}

The clinching auction takes as input the valuation profile $v$, the budget
profile $B$ and the initial supply $s$, then it runs a procedure based on the
ascending price framework to determine final allocation and payments. There is a
price clock $p$, and for each price, the auction mantains\footnote{note
that here we prefer to index the ascending process by the price itself rather
then an external variable, like in Bhattacharya el al \cite{Bhattacharya10}.}
$x_i(p)$ denoting the current allocation of player $i$ and $B_i(p)$, which is
the current remaining budget of player $i$. Initially, $x_i(0) = 0$ and $B_i(0)
= B_i$, their initial budget. For each $p$, the auction
defines the values of the right-derivatives $\partial_p x_i(p)$ and $\partial_p
B_i(p)$ and described its behavior in the points in which it is discontinuous.
Notice we will use $\partial_p f(p)$ to denote the right-derivative of $f$ at
$p$.

For simplicity, we define the auction and prove our results for valuation
profiles $v$ such that $v_i \neq v_j$ for each $i \neq j$ (we call it a profile
in generic form) This is mainly a technical assumption to avoid
over-complicating the statement and the proof. 
In Appendix \ref{appendix:repeated-values}, we extend this for any valuation
profile $v$.
For the definition and its subsequent discussion, we will use the
following implicitly defined notation:\\

\begin{compactitem}
%\begin{itemize}
 \item remnant supply: $S(p) = s - \sum_i x_i(p)$
 \item active players: $A(p) = \{i; v_i > p\}$
 \item clinching players: \columnsversion{$C(p) = \{i \in A(p); S(p) = \sum_{j
\in A(p)
\setminus i} \frac{B_j(p)}{p} \}$}{$$C(p) = \{i \in A(p); S(p) =
\textstyle\sum_{j
\in A(p)
\setminus i} \frac{B_j(p)}{p} \}$$}
 \item maximum remaining budget:\columnsversion{$B_*(p) = \max_{i \in A(p)}
B_i(p)$}{$$B_*(p) = \max_{i \in A(p)} B_i(p)$$}
 \item for any function $f$, let $f(\bar{p} -) = \lim_{p \uparrow \bar{p}} f(p)$
and $f(\bar{p} +) = \lim_{p \downarrow \bar{p}} f(p)$
%\end{itemize}
\end{compactitem}

\begin{defn}[Adaptive Clinching Auction]
\label{defn:adaptive_clinching_auction}
 Given as input a valuation vector $v$ in generic form, a budget vector $B$ and
initial supply $s$, consider the functions $x_i(p), B_i(p)$ such that:

%\begin{compactenum}
\begin{enumerate}
 \item[(i)] $x_i(0) = 0$ and $B_i(0) = B_i$.
 \item[(ii)] $\partial_p x_i(p) = \frac{S(p)}{p}$ and $\partial_p B_i(p) =
-S(p)$ if
$i \in C(p)$ and  $\partial_p x_i(p) = \partial_p B_i(p) = 0$ otherwise.
 \item[(iii)] the functions $x_i$ and $B_i$ are right-continuous at all points
$p$, i.e., $x_i(p) = x_i(p+)$ and $B_i(p) = B_i(p+)$ for all $p$ and it is
left-continuous at all points $p \notin \{v_1, \hdots, v_n\}$, i.e., $x_i(p-) =
x_i(p)$ and $B_i(p-) = B_i(p)$ for all $p \notin \{v_1, \hdots, v_n\}$
 \item[(iv)] for $p = v_i$, let \columnsversion{$\delta_j = \left[ S(v_i -) -
\sum_{k \in
A(v_i) \setminus j} \frac{B_k(v_i -)}{v_i} \right]^+$.}{$$\delta_j = \left[
S(v_i -) - \textstyle\sum_{k \in A(v_i) \setminus j} \frac{B_k(v_i -)}{v_i}
\right]^+$$} For $j \in A(v_i)$, let $x_j(v_i) =
x_j(v_i -) + \delta_j$ and $B_j(v_i) = B_j(v_i -) - v_i \delta_j$ and for $j
\notin A(v_i)$, $x_j(v_i) = x_j(v_i-)$ and $B_j(v_i) = B_j(v_i -)$.
\end{enumerate}
%\end{compactenum}

The existence and uniqueness of those functions follow from elementary 
real analysis. The outcome associated with $v,B,s$ is $x_i = \lim_{p \rightarrow
\infty} x_i(p)$ and $\pi_i = B_i(0) - \lim_{p \rightarrow \infty} B_i(p)$.
Notice that this is well defined since $x$ and $B$ are constant for $p > \max_i
v_i$.
\end{defn}

The verb {\em clinch} means acquiring goods that are underdemanded at the
current price. So clinching a $\delta_i$ amount at price $p$ means receiving
$\delta_i$ amount of the good and paying $\delta_i p$ for it. When we refer to a
player clinching some amount, either we refer to the infinitesimal clinching
happening in (ii) or the player clinching positive units in (iv).

The reader familiar with the clinching auction for indivisible goods will
notice that the definition above is nothing more than the limit as $\epsilon
\rightarrow 0$ of this auction
run with $\frac{1}{\epsilon} s$ indivisible goods and valuations $\epsilon v_i$
per unit. This auction satisfies all the desirable properties for
multi-unit auctions with budgets:

\begin{theorem}[Bhattacharya et al \cite{Bhattacharya10}] The Adaptive
Clinching Auction in Definition \ref{defn:adaptive_clinching_auction} is
incentive-compatible, individually-rational, budget-feasible and produces
Pareto-optimal outcomes.
\end{theorem}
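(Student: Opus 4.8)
\emph{Plan.} My plan is to treat the four assertions separately, grouped by how much they use the fine structure of the differential process. Individual rationality and budget-feasibility rest on a few monotonicity invariants, which I would prove by induction over the sorted breakpoints $v_{(1)}<\cdots<v_{(n)}$ — on each open interval between consecutive breakpoints one solves the ODE in (ii), and at a breakpoint one checks the jump rule (iv). Namely: $S(p)\ge 0$ for all $p$; each $x_i(\cdot)$ is non-decreasing, each $B_i(\cdot)$ is non-increasing, and both are eventually constant; and $C(p)\subseteq A(p)$ (so a bidder clinches only while active), while at $p=v_k$ only bidders $j$ with $v_j>v_k$ get a positive jump $\delta_j$. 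Given these, budget-feasibility reduces to the single statement $\lim_{p\to\infty}B_i(p)\ge 0$ (then $\pi_i=B_i-\lim_{p\to\infty}B_i(p)\in[0,B_i]$), and individual rationality is immediate: every unit bidder $i$ obtains is clinched at a price strictly below $v_i$, so $u_i=v_ix_i-\pi_i\ge 0$, with equality only if $x_i=0$.

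\emph{The main obstacle.} The one genuinely delicate point — where I expect the real work to lie — is $\lim_{p\to\infty}B_i(p)\ge 0$: the process never makes a bidder clinch beyond her budget. This is the analytic heart of the proof. One tracks, on each inter-breakpoint interval, which active bidders lie in $C(p)$ (they are exactly the active bidders whose remaining budget is minimal within a suitable subset, and the clinching equality $S(p)=\sum_{l\in A(p)\setminus i}B_l(p)/p$, once attained, is preserved because both sides have the same right-derivative), and then shows (i) a bidder clinches continuously only while her remaining budget is strictly positive, so it decreases continuously to $0$, at which point she stops clinching; and (ii) at each breakpoint $p=v_k$ the jump masses are affordable — indeed, if the exiting bidder was clinching just before $v_k$, then each surviving higher bidder $l$ has $v_k\delta_l=B_l(v_k-)$ exactly, using up her whole budget, and otherwise $v_k\sum_j\delta_j$ stays within the budget available. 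This same bookkeeping supplies, essentially for free, the two facts needed below for Pareto-optimality.

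\emph{Pareto-optimality.} I would use the exchange characterization of Pareto-optimal budgeted multi-unit outcomes (\cite{dobzinski12}; see also \cite{goel12}): an outcome $(x,\pi)$ with $\pi\le B$ is Pareto-optimal iff $\sum_i x_i=s$ and there is no pair $i\ne j$ with $v_i>v_j$, $x_j>0$ and $\pi_i<B_i$. Only the ``only if'' direction is needed: if some supply is unsold, give a spare unit for free to a highest-value bidder; if such a pair $(i,j)$ exists, shift a small amount $\delta$ of the good from $j$ to $i$ while charging $i$ an extra $q\delta$ and refunding $j$ the same $q\delta$ for some $q\in(v_j,v_i)$ — both strictly gain, $\sum_i\pi_i$ is unchanged, and budgets hold since $\pi_i<B_i$. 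It then remains to check the clinching outcome satisfies both conditions (here, as is standard, $n\ge 2$). That $\sum_i x_i=s$ is forced because, once only the top bidder $i^*$ (value $v_{(1)}$) is active, rule (iv) at price $v_{(2)}$ sells $i^*$ the entire remnant $S(v_{(2)}-)$, so $S\equiv 0$ thereafter; and no-blocking is a by-product of the analysis above — if $x_j>0$ then $j$ clinched positive mass at some price below $v_j$, which, via the description of $C(p)$ and of the jump at $v_j$, forces every bidder $i$ with $v_i>v_j$ to be budget-exhausted at the end, i.e.\ $\pi_i=B_i$.

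\emph{Incentive-compatibility.} Fix a bidder $i$ and the others' (generic) reports $v_{-i}$, and compare the truthful run (report $v_i$) with the run where $i$ reports $v_i'$ (also taken generic; ties handled by continuity); put $m=\min(v_i,v_i')$. The key observation is that the two runs are \emph{identical for every price $p<m$}: the only way $i$'s report affects the dynamics below $m$ is through the active set $A(p)$ and the location of the breakpoints, and both coincide in the two runs for $p<m$; hence $C(p)$, the ODE (ii) and the jump rule (iv) all agree there, so the entire state coincides up to price $m$. Also, in any run bidder $i$ acquires nothing at prices $\ge$ her reported value (she is outside $A(p)$ there, and at her own reported value she is not among the bidders that receive a jump). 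Consequently, if $v_i'>v_i=m$, bidder $i$'s final bundle under the misreport is her truthful bundle plus some extra mass $\Delta\ge 0$ clinched at prices in $[v_i,v_i')$, each unit costing $\ge v_i$; so her payment rises by at least $v_i\Delta$ and $v_ix_i'-\pi_i'\le v_ix_i-\pi_i$ (and if the misreport pushes $\pi_i'$ above $B_i$, her utility is $-\infty\le v_ix_i-\pi_i$, the latter being $\ge 0$ by individual rationality). Symmetrically, if $v_i'<v_i=m$, the misreport's bundle is the truthful one minus some mass $\Delta'\ge 0$ that the truthful run clinched at prices in $[v_i',v_i)$, each such unit costing $<v_i$, so her payment falls by at most $v_i\Delta'$ and again utility does not increase. (Alternatively, incentive-compatibility, individual rationality and budget-feasibility can be obtained at once from the remark preceding the theorem: the Adaptive Clinching Auction is the $\epsilon\to 0$ limit of the Dobzinski--Lavi--Nisan indivisible clinching auction run with $s/\epsilon$ units and per-unit values $\epsilon v_i$, and all three are closed conditions stable under this limit — but Pareto-optimality need not survive such limits, so the direct argument above for it is needed regardless.)
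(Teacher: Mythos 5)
First, note that the paper does not prove this statement at all: it is imported verbatim from Bhattacharya et al.\ \cite{Bhattacharya10} (the paper only proves auxiliary structural lemmas about Definition~\ref{defn:adaptive_clinching_auction} for its own supply-monotonicity result), so there is no in-paper proof to measure you against. Your sketches for individual rationality, budget feasibility and incentive compatibility are essentially the standard arguments and look sound; in particular your invariants are the same ones the paper establishes for other purposes (Lemmas~\ref{lemma:once_clinching_always_clinching}, \ref{lemma:supply_inequality}, \ref{lemma:budget_profile_format_pre}), and your coupling-below-$\min(v_i,v_i')$ argument for truthfulness is correct. One slip worth fixing: the clinching players are the \emph{active players with maximal remaining budget} (this is exactly Lemma~\ref{lemma:budget_profile_format_pre}), not those with minimal budget ``within a suitable subset''; this fact is what your step (i) actually needs (if a clincher's budget hits $0$ then $B_*(p)=0$, hence $S(p)=0$ and spending stops), so state it correctly.

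The genuine gap is in the Pareto-optimality part: you have the directions of the characterization backwards. To conclude that the clinching outcome is Pareto-optimal from the fact that it allocates all supply and admits no pair $i\neq j$ with $v_i>v_j$, $x_j>0$, $\pi_i<B_i$, you need the \emph{sufficiency} direction (conditions $\Rightarrow$ Pareto-optimal). The trading argument you supply (hand unsold supply to a top bidder; shift $\delta$ from $j$ to $i$ at a price $q\in(v_j,v_i)$) proves the \emph{other} direction, namely that a Pareto-optimal outcome must satisfy the conditions, which is not what is needed here; the direction you do need is the nontrivial one (one must rule out \emph{every} dominating outcome, not just these two local improvements), and in your write-up it is covered only by the wholesale citation to \cite{dobzinski12}/\cite{goel12} while you explicitly disclaim needing it. Your verification that the clinching outcome satisfies the two conditions (the jump at the second-highest value exhausts the supply; if $x_j>0$ then $j\in C(p)$ up to $v_j$, forcing every surviving higher-value bidder to end with zero remaining budget) is correct and is the auction-specific content, but as written Pareto-optimality is not established without either importing or proving the sufficiency direction of the characterization.
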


As one can possibly guess, it is possible to solve the differential equation in
each interval between two adjacent values of $v_i$ and give an explicit
description of the clinching auction. We do that in Appendix
\ref{appendix:explicit}. Nevertheless,
we mostly prove our results using the differential form in Definition
\ref{defn:adaptive_clinching_auction} which is more insightful than the
explicit version.

\begin{example}\label{example:clinching_run}
 At this point, it is instructive to consider an example of the auction.
Consider an auction between $n = 4$ players with valuations $v= [9,10,11,5.7]$
and $B = [3,2,1,.5]$. The functions $x_i(p), B_i(p)$ are depicted in Figure
\ref{fig:run_auction}. For $p < p_0^1 = 3.5$, the clinching set $C(p)$ is
empty. At this price $S(p_0^1) = 1 = \frac{2+1+0.5}{3.5} = \sum_{j \neq 1}
\frac{B_j}{p}$, so player $1$ alone begins clinching.

Since he is clinching alone for a while, $x_1(p) = s - S(p)$. Now by derivating
this expression we get that $\frac{S(p)}{p} = \partial_p x_1(p) = - \partial_p
S(p)$. Solving for the supply with the condition that $S(p_0^1) = 1$, we get:
$S(p) = \frac{s p_0^1}{p}$ and $x_1(p) = s - \frac{s p_0^1}{p}$. This continues
while no other player enters the clinching set. The supply function $S(p)$ is
illustrated in the first part of Figure \ref{fig:supply_and_wishful}.

Notice that for this period, the budget of $1$ is being spent while the budgets
of the other agents are intact. Eventually, the budget of $1$ meets the budget
of $2$, and at this point, those two players are indistinguishable from the
perspective of the differential procedure as long as both are active.
Therefore, both start spending their budget at the same rate and acquiring
goods at the same rate $\partial_p x_1(p) = \partial_p x_2(p) = S(p)$. Since
from this point on $S(p) = s - x_1(p) - x_2(p)$, then $\frac{S(p)}{p} =
\partial_p x_1(p) = - \frac{1}{2} \partial_p S(p)$. Solving again for the
supply and the boundary condition $S(p_0^2) = \frac{s p_0^1}{p_0^2}$ we get:
$S(p) = \frac{s p_0^1 p_0^2}{p^2}$. Using this, one can  calculate
$x_1(p)$ and $x_2(p)$. Both continue clinching at the same rate until the price reaches
$p = v_4$, where player $4$ exits the active set prompting the agents to
clinch a positive amount according to (iv).

Their allocation $x_i(p)$ and budgets $B_i(p)$ are discontinuous at this point,
but continue to follow the differential procedure after this point, having
their budgets all equal (not coincidentally, as we will see  in Lemma
\ref{lemma:budget_profile_format_pre}), until price $p = v_1$ is reached and
player $1$ exits the active set. At this point, all the remaining active players
 clinch a positive amount according to (iv) that exhausts the supply.
Therefore, allocations and budgets are constant from this point on.
\end{example}

\columnsversion{\begin{figure}}{\begin{figure*}}
\centering
\includegraphics{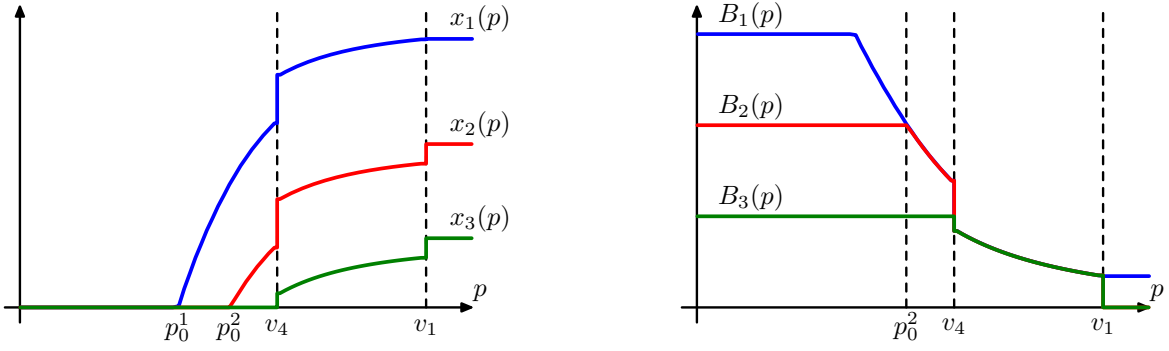}
\caption{The functions $x(p)$ and $B(p)$ for an auction in Example
\ref{example:clinching_run}}
\label{fig:run_auction}
\columnsversion{\end{figure}}{\end{figure*}}

\columnsversion{\begin{figure}}{\begin{figure*}}
\centering
\includegraphics{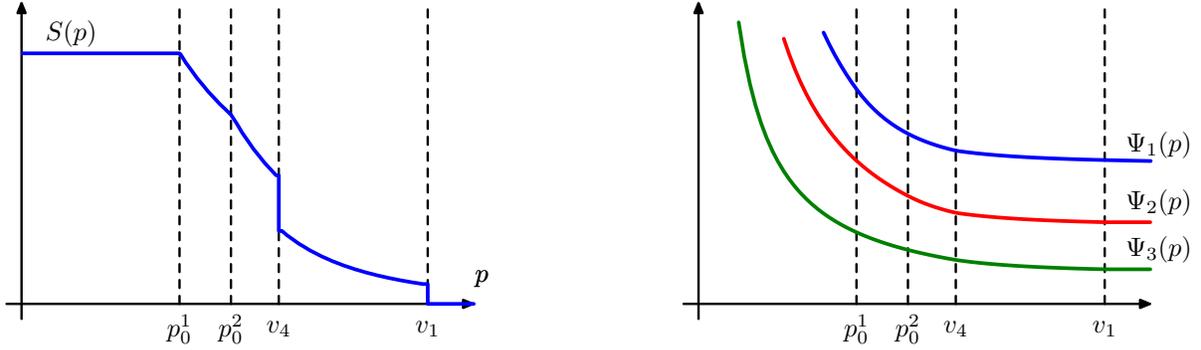}
\caption{Supply $S(p)$ and wishful allocation $\Psi(p)$ for an auction in
Example
\ref{example:clinching_run}}
\label{fig:supply_and_wishful}
\columnsversion{\end{figure}}{\end{figure*}}

One important tool in analyzing this auction is the concept of the {\em wishful
allocation}. We define a $\Psi_i(p)$ as a function of $x_i(p)$ and $B_i(p)$
which is continuous even at the points where $x_i(p)$ and $B_i(p)$ are not. It
is carefully set up so that the discontinuities from both functions cancel out.
Intuitively, it represents a sum of what the player acquired already at the
current price $x_i(p)$ with the maximum amount he would like to acquire at this
price, which is $\frac{B_i(p)}{p}$.

\begin{defn}[Wishful allocation]
The wishful allocation is defined as $\Psi_i(p) = x_i(p) + \frac{B_i(p)}{p}$.
\end{defn}

\begin{lemma}\label{lemma:wishful_allocation_properties}
The wishful allocation is continuous and right-differentiable for all $p \geq
0$. Moreover, its  right-derivative is given by: $\partial_p \Psi_i(p) =
-\frac{B_i(p)}{p^2}$.
\end{lemma}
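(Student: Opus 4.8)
The plan is to establish the two assertions separately; each reduces to a short computation that exploits the way conditions (ii) and (iv) of Definition~\ref{defn:adaptive_clinching_auction} are calibrated. Indeed $\Psi_i$ is defined precisely so that the infinitesimal transfers in (ii) and the discrete jumps at the values $v_\ell$ are both invisible to it. Throughout, recall that $x_i$ and $B_i$ are right-continuous at every $p>0$ by (iii), so $\Psi_i$ is automatically right-continuous; the work is in left-continuity and in the right-derivative.

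\emph{Continuity.} Away from the finite set $\{v_1,\dots,v_n\}$, and for $p>0$, the functions $x_i$, $B_i$ and $p\mapsto 1/p$ are continuous, hence so is $\Psi_i$. It then remains to check a value $p=v_\ell$, where I would argue that the left-limit agrees with the value. If $i\notin A(v_\ell)$ --- in particular when $i=\ell$ --- condition (iv) gives $x_i(v_\ell)=x_i(v_\ell -)$ and $B_i(v_\ell)=B_i(v_\ell -)$, so $\Psi_i(v_\ell -)=\Psi_i(v_\ell)$. If $i\in A(v_\ell)$, then (iv) prescribes $x_i(v_\ell)=x_i(v_\ell -)+\delta_i$ and $B_i(v_\ell)=B_i(v_\ell -)-v_\ell\,\delta_i$, whence
$$\Psi_i(v_\ell)=x_i(v_\ell -)+\delta_i+\frac{B_i(v_\ell -)-v_\ell\,\delta_i}{v_\ell}=x_i(v_\ell -)+\frac{B_i(v_\ell -)}{v_\ell}=\Psi_i(v_\ell -),$$
so the jump in $x_i$ cancels exactly the jump in $B_i/p$, regardless of the actual value of $\delta_i$. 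Hence $\Psi_i$ is continuous at $v_\ell$ as well.

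\emph{The right-derivative.} On the initial interval before any clinching occurs and on each interval between two consecutive values $v_\ell$, the pair $(x_i,B_i)$ solves the ODE of (ii), whose right-hand side is piecewise continuous in $p$; by the elementary real analysis invoked after Definition~\ref{defn:adaptive_clinching_auction}, $x_i$ and $B_i$ are right-differentiable at every $p>0$ with right-derivatives exactly as stated in (ii). Since $p\mapsto 1/p$ is differentiable with derivative $-1/p^2$ and one-sided derivatives obey the usual sum and product rules (the product rule here using right-continuity of $B_i$), we obtain for every $p>0$
$$\partial_p\Psi_i(p)=\partial_p x_i(p)+\frac{\partial_p B_i(p)}{p}-\frac{B_i(p)}{p^2}.$$
If $i\in C(p)$ I substitute $\partial_p x_i(p)=S(p)/p$ and $\partial_p B_i(p)=-S(p)$, so the first two terms cancel; if $i\notin C(p)$ both $\partial_p x_i(p)$ and $\partial_p B_i(p)$ are zero, so the first two terms vanish. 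In either case $\partial_p\Psi_i(p)=-B_i(p)/p^2$. No separate treatment of the points $v_\ell$ is needed: (ii) already prescribes the right-derivative there (using the post-jump, hence right-continuous, values), while the jump of (iv) only affects the left side.

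I do not expect a genuine obstacle; all the content is in the two cancellations above. The points that warrant a word of care are: the behaviour at $p=0$, where $\Psi_i(p)=B_i/p\to\infty$, so ``for all $p\ge 0$'' should be read as ``on $(0,\infty)$'' (for sufficiently small $p$ nothing has yet been clinched, so there $\Psi_i(p)=B_i/p$ and the formula holds trivially); and the need to read (ii) as defining honest one-sided derivatives rather than merely Dini derivates, which is exactly what the existence-and-uniqueness remark accompanying Definition~\ref{defn:adaptive_clinching_auction} supplies.
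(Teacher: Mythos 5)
Your proof is correct and follows essentially the same route as the paper's: establish left-continuity at the points $v_\ell$ by observing that the jump in $x_i$ cancels the jump in $B_i/p$, then compute the right-derivative by the sum/quotient rule and substitute the derivatives from condition~(ii). You are a bit more explicit than the paper in separating the cases $i\in A(v_\ell)$ versus $i\notin A(v_\ell)$ and $i\in C(p)$ versus $i\notin C(p)$, and your parenthetical remark about $p=0$ is a reasonable clarification, but there is no substantive difference in method.
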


The proof is elementary and can be found in Appendix
\ref{appendix:missing_proofs}. Since $\Psi_i(p) \geq x_i(p)$ and is a monontone
non-increasing function converging to the final allocation as $p \rightarrow
\infty$, it constantly gives us an upper bound of the final allocation.

Now, we study some other properties of the above auction, which will be useful
in the proof of our main theorem. First we prove a Meta Lemma
that sets the basic structure for most of our proofs. The lemma is based on
elementary facts of real analysis. Its proof, as well as other missing
proofs of this section can be found in Appendix \ref{appendix:missing_proofs}.

\begin{metalemma}\label{the_metalemma}
Given a property $\Lambda$ that depends on $p$, if we want to prove for all
$p \geq p_0$, it is enough to prove the following facts:
\begin{compactenum}
 \item[(a)] it holds for $p = p_0$.
 \item[(b)] if $\Lambda$ holds for $p$, then there is some $\epsilon_p > 0$
 such that $\Lambda$ holds for $[p,p+\epsilon_p)$
 \item[(c)] if $\Lambda$ holds for all $p'$ such that $p_0 \leq p' < p$, then
$\Lambda$ also holds for $p$.
\end{compactenum}
\end{metalemma}

For most properties $\Lambda$ that we want to prove about
the Adaptive Clinching
Auction, part (a) is easy to show, part (b) requires using the right-continuity of
the function and the value of the right-derivatives given in item (ii) of
Definition \ref{defn:adaptive_clinching_auction} and part (c) is usually
proved using continuity for $p \notin \{v_1, \hdots, v_n\}$ and using part (iv)
of Definition \ref{defn:adaptive_clinching_auction}.\\

The first two lemmas (whose proof is based on the Meta-Lemma) state that once a
player start acquiring goods (i.e. $\partial_p x_i(p) > 0$), he continues to do
so for all the prices until $p$ becomes equal to his value $v_i$.
All the proofs can be found in Appendix \ref{appendix:missing_proofs}.

\begin{lemma}\label{lemma:once_clinching_always_clinching}
Once a player $i$ enters the clinching set, then he is in the clinching set
until he becomes inactive, i.e., if $i \in C(p)$ for some $p$, then $i \in
C(p')$ for all $p' \in [p,v_i)$.
\end{lemma}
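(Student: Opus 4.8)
The approach is to recast the clinching condition as the vanishing of a single scalar function and then run Meta-Lemma~\ref{the_metalemma}. For the fixed player $i$, define the slack
\[
g_i(p) \;=\; S(p) \;-\; \sum_{j\in A(p)\setminus i}\frac{B_j(p)}{p},
\]
so that for every $p<v_i$ (where $i\in A(p)$) we have $i\in C(p)\iff g_i(p)=0$. Given $p_0$ with $i\in C(p_0)$, note $p_0<v_i$ (since $C(p_0)\subseteq A(p_0)$), and apply the Meta-Lemma to the property $\Lambda(p)\equiv$ ``$g_i(p)=0$'' on the interval $[p_0,v_i)$; the work is entirely in computing the right-derivative of $g_i$.

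On any interval free of the values $v_1,\dots,v_n$ one may substitute the rates from item (ii) of Definition~\ref{defn:adaptive_clinching_auction} and use $\partial_p S(p)=-\abs{C(p)}\,S(p)/p$ to obtain
\[
\partial_p g_i(p) \;=\; \begin{cases} 0, & \text{if } g_i(p)=0,\\[4pt] \dfrac{1}{p}\sum_{j\in A(p)\setminus i}\dfrac{B_j(p)}{p}\ \ \ge\ 0, & \text{if } g_i(p)\neq 0.\end{cases}
\]
In the first line the cancellation is precisely the clinching identity $\sum_{j\in A(p)\setminus i}B_j(p)/p=S(p)$; in the second line one also uses that $i\notin C(p)$, so $i$'s own allocation and budget are momentarily frozen. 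In particular $\partial_p g_i\ge0$ everywhere, so $g_i$ is nondecreasing on value-free intervals, and together with $g_i(p_0)=0$ this already forces $g_i\ge0$ just to the right of $p_0$. Hence the whole content of the lemma is that $g_i$ cannot become strictly positive.

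Part (a) of the Meta-Lemma is the hypothesis. For part (c): if $p\notin\{v_1,\dots,v_n\}$ then $S$, the $B_j$, and $A$ are left-continuous at $p$ (item (iii)), so $g_i(p)=\lim_{p'\uparrow p}g_i(p')=0$; if $p=v_k$ (necessarily $k\ne i$, since $v_k=p<v_i$, and $i\in A(v_k)$), the inductive hypothesis gives $g_i(v_k-)=0$, i.e.\ $S(v_k-)=\sum_{j\in A(v_k-)\setminus i}B_j(v_k-)/v_k$ with $A(v_k-)=A(v_k)\cup\{k\}$; a short calculation then yields $\delta_i=B_k(v_k-)/v_k$ in item (iv), and substituting the post-jump values into $g_i(v_k)$ collapses back to the same identity, so $g_i(v_k)=0$. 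For part (b), fix $p$ with $g_i(p)=0$, choose $\epsilon>0$ with $(p,p+\epsilon)$ free of the $v_k$'s and $p+\epsilon\le v_i$, and suppose $g_i>0$ somewhere on $(p,p+\epsilon)$. Let $q$ be the right endpoint of the connected component of $\{t:g_i(t)=0\}$ containing $p$; then $g_i(q)=0$ and $g_i>0$ on some $(q,q+\delta)$. On $(q,q+\delta)$ the second branch applies, so (using that the trajectories are locally Lipschitz on value-free intervals) $\partial_p g_i(q)=\lim_{h\downarrow0}g_i(q+h)/h=\frac{1}{q^2}\sum_{j\in A(q)\setminus i}B_j(q)$; but $g_i(q)=0$ and the first branch give $\partial_p g_i(q)=0$, forcing $\sum_{j\in A(q)\setminus i}B_j(q)=0$, hence $S(q)=g_i(q)=0$. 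Since $S$ is nonincreasing and nonnegative this gives $S\equiv0$ on $[q,q+\delta)$, so there $g_i=-\sum_{j\in A(\cdot)\setminus i}B_j(\cdot)/(\cdot)\le0$, contradicting $g_i>0$. Thus no such $q$ exists and $g_i\equiv0$ on $[p,p+\epsilon)$.

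The step I expect to be the real obstacle is part (b): the tempting one-liner ``$\partial_p g_i=0$ whenever $g_i=0$, so $g_i$ stays at $0$'' is a non-sequitur, since a vanishing right-derivative does not pin a continuous function to its value. What makes it work is the switch in the dynamics: the instant $g_i$ leaves $0$, player $i$ drops out of the clinching set, the derivative formula changes branch, and matching the two branches at the departure point $q$ is possible only in a degenerate configuration ($S(q)=0$ with all rival active budgets already exhausted), which is itself incompatible with $g_i$ going positive. Beyond Definition~\ref{defn:adaptive_clinching_auction}, the only facts used are that the remnant supply stays nonnegative and that the trajectories are locally Lipschitz on value-free intervals, both immediate from the construction.
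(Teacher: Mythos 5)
Your proof is correct and follows the same Meta-Lemma skeleton as the paper's, with parts (a) and (c) essentially identical (the left-continuity argument at non-values, and the jump computation at $p=v_k$ showing $\delta_i=B_k(v_k-)/v_k$ and that $g_i(v_k)=0$). Where you genuinely add something is in part (b). The paper's version simply differentiates both sides of the clinching identity $S(p')=\sum_{j\in A\setminus i}B_j(p')/p'$ and shows the derivatives match --- but the calculation it gives substitutes $\sum_{j\in A\setminus i}B_j(p')/p'=S(p')$ and $|C(p')\setminus i|+1=|C(p')|$, i.e.\ it assumes $i\in C(p')$ at the very point where it is computing the derivative. As you observe, ``$\partial_p g_i=0$ whenever $g_i=0$'' is not by itself enough to pin a function to zero; the paper's argument is best read as a guess-and-verify step that implicitly leans on the existence-and-uniqueness claim asserted (but not proved) after Definition~\ref{defn:adaptive_clinching_auction}. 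Your contradiction argument avoids that reliance: establishing $\partial_p g_i\ge 0$ on value-free intervals gives $g_i\ge 0$, then if $g_i$ ever left $0$ at a point $q$, matching the prescribed right-derivative $\partial_p g_i(q)=0$ (first branch, since $g_i(q)=0$) against the FTC-computed right-derivative $q^{-2}\sum_{j\in A(q)\setminus i}B_j(q)$ (second branch, from the strictly positive piece) forces all rival active budgets to vanish, hence $S(q)=0$, hence $g_i\le 0$ just to the right --- a contradiction. This is a sound and slightly more self-contained route to the same conclusion; the two proofs are otherwise the same.
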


\begin{lemma}\label{lemma:supply_inequality}
For each price $p$ and each active player $i$, $S(p) \leq \sum_{j \in A(p)
\setminus i} \frac{B_j(p)}{p}$.
\end{lemma}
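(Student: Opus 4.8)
The plan is to apply the Meta-Lemma (Lemma~\ref{the_metalemma}) to a single scalar invariant that simultaneously encodes the inequality for all active players. Writing $G(p) = \sum_{j \in A(p)} \frac{B_j(p)}{p} - S(p)$, the inequality claimed for an active player $i$ is exactly $p\,G(p) \ge B_i(p)$; and since $B_i(p)\le B_*(p)$ for every active $i$, the whole statement is equivalent to the single inequality
$$h(p)\;:=\;p\,G(p)-B_*(p)\;\ge\;0 .$$
So I would prove $h(p)\ge 0$ for all $p\ge 0$ by the Meta-Lemma. It is convenient to first observe that in these terms the clinching set has the clean description $C(p)=\{\,i\in A(p):B_i(p)=p\,G(p)\,\}$; hence $h(p)>0$ forces $C(p)=\emptyset$, while $h(p)=0$ forces $C(p)$ to equal exactly the set $M$ of active players attaining the maximum remaining budget.

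The key preliminary computation is the right-derivative of $p\,G(p)$ at a point $p\notin\{v_1,\dots,v_n\}$. Using $\partial_p B_j(p)=-S(p)$ for $j\in C(p)$ (and $0$ otherwise) and $\partial_p S(p)=-|C(p)|\,S(p)/p$ from Definition~\ref{defn:adaptive_clinching_auction}(ii), a one-line calculation yields the clean identity $\partial_p[p\,G(p)]=-S(p)$, with the dependence on $|C(p)|$ cancelling out. I would also record the elementary fact that, $B_*$ being a maximum of finitely many right-differentiable non-increasing functions over a locally constant index set, $\partial_p B_*(p)=\max\{\partial_p B_i(p):i\in A(p),\, B_i(p)=B_*(p)\}$.

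Now the three parts of the Meta-Lemma. For part~(a): before any clinching occurs one has $x_i(p)=0$, $S(p)=s$, $B_i(p)=B_i$, so $\sum_{j\ne i}B_j/p\ge s$ holds once $p$ is small; pick such a $p_0$. Part~(b), the heart of the proof, splits by the value of $h(p)$. If $h(p)>0$ then $C(p)=\emptyset$, so on a small right-interval avoiding the $v_k$'s nothing clinches, $S$ and all $B_j$ are constant, and $h(p')=h(p)-S(p)(p'-p)$, which stays $\ge 0$ for $p'$ close to $p$. If $h(p)=0$, then $C(p)=M\ne\emptyset$; by Lemma~\ref{lemma:once_clinching_always_clinching} every member of $M$ remains in the clinching set to the right of $p$, so their budgets all decrease at the common rate $S$, stay equal to one another, and stay maximal, whence $\partial_p B_*(p')=-S(p')=\partial_p[p'G(p')]$, i.e.\ $\partial_p h\equiv 0$ on a small right-interval (chosen short enough that no $v_k$ is crossed and no player outside $M$ enters $C$). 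A continuous function with identically zero right-derivative is constant, so $h\equiv 0$ there and $\Lambda$ holds on $[p,p+\epsilon_p)$. For part~(c): if $p\notin\{v_k\}$ then $h$ is continuous at $p$, so $h(p)=\lim_{p'\uparrow p}h(p')\ge 0$. If $p=v_k$, then $A(v_k)=A(v_k-)\setminus\{k\}$, and the jumps of $S$ and of each $B_j$ are coupled through rule~(iv), $x_j(v_k)=x_j(v_k-)+\delta_j$ and $B_j(v_k)=B_j(v_k-)-v_k\delta_j$; plugging these in, the $\delta_j$'s cancel and one obtains $v_kG(v_k)=\sum_{j\in A(v_k)}B_j(v_k-)-v_kS(v_k-)$. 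Then for each active $j$ either $\delta_j=0$, in which case $v_kG(v_k)-B_j(v_k)=v_kG(v_k)-B_j(v_k-)$ equals $v_k$ times $[\,\sum_{l\in A(v_k)\setminus j}B_l(v_k-)/v_k-S(v_k-)\,]$, which is $\ge 0$ precisely because $\delta_j=0$; or $\delta_j>0$, in which case rule~(iv) sets $\delta_j$ so that $B_j(v_k)=v_kG(v_k)$ exactly, giving $v_kG(v_k)-B_j(v_k)=0$. Either way $v_kG(v_k)\ge B_j(v_k)$ for all active $j$, hence $h(v_k)\ge 0$. (The induction hypothesis $h(v_k-)\ge 0$ enters here only to guarantee $v_kG(v_k)=v_kG(v_k-)-B_k(v_k-)\ge 0$, keeping all post-jump budgets non-negative.)

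The step I expect to be the real obstacle is part~(b) when $h(p)=0$: the bare fact $\partial_p h(p)=0$ does not by itself prevent $h$ from dipping below $0$ immediately afterwards, so the argument genuinely needs that the clinching set stays equal to $M$ on an \emph{entire} right-neighbourhood of $p$ — which is where Lemma~\ref{lemma:once_clinching_always_clinching} together with the finiteness of the player set and of $\{v_1,\dots,v_n\}$ are used to produce $\epsilon_p$. By contrast, the jump case in part~(c) looks forbidding but turns out to be pure bookkeeping once one notices the $\delta_j$-cancellation.
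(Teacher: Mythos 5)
Your proof is correct and uses the same Meta-Lemma induction as the paper, but with a different decomposition worth noting. The paper fixes an active $i$ and proves $S(p)\leq\sum_{j\in A(p)\setminus i}B_j(p)/p$ directly: in part~(b) the equality branch is handled by re-running the derivative computation from Lemma~\ref{lemma:once_clinching_always_clinching}, and in part~(c) the jump at $p=v_j$ is handled by a bound on $\delta_i$ (its equation~(\ref{eq:supply_update}), which is exactly your $\delta$-cancellation argument repackaged, and which likewise does not actually need the induction hypothesis). You instead collapse the $n$ per-player inequalities into the single scalar invariant $h(p)=p\,G(p)-B_*(p)\geq 0$; this buys the clean, $|C(p)|$-independent identity $\partial_p[p\,G(p)]=-S(p)$, so part~(b) reduces to comparing $-S$ with $\partial_p B_*$, and the case split becomes simply $h(p)>0$ versus $h(p)=0$. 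Two small points you would want to spell out in a write-up: in the $h(p)>0$ branch, the step from ``$C(p)=\emptyset$'' to ``$C\equiv\emptyset$ on a right-interval'' should be credited to right-continuity of $S$ and the $B_j$ (so the defining strict inequalities persist), since the decay formula for $h$ already presupposes that no clinching occurs and cannot itself be used to establish it; and in the $h(p)=0$ branch, the assertion that $M$ ``stays maximal'' on $[p,p+\epsilon_p)$ deserves a sentence -- any $j\notin M$ whose constant budget caught up with $M$'s common decreasing budget would satisfy $B_j(p')=p'G(p')$ and therefore enter $C(p')$, contradicting the choice of $\epsilon_p$ keeping $C\equiv M$. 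These are minor; the argument is sound and, if anything, slightly cleaner than the per-player version.
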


\begin{cor}\label{cor:clinching_set_after_suddenly_clinching}
If at price $p = v_j$, player $i \in A(v_j)$ acquires any positive amount of the
good $\delta_i > 0$, then he enters in the clinching set (if he wasn't
previously), i.e., $i \in C(v_j)$.
\end{cor}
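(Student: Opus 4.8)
The plan is to read the claim off directly from the description of the discrete jump in item (iv) of Definition~\ref{defn:adaptive_clinching_auction}; no appeal to the preceding lemmas is needed. Write $A = A(v_j)$ and abbreviate $S^- = S(v_j-)$ and $B^-_k = B_k(v_j-)$. What must be verified is the clinching condition $S(v_j) = \sum_{k \in A \setminus i} \frac{B_k(v_j)}{v_j}$ at the point \emph{after} the update at price $p = v_j$; since $i \in A$ by hypothesis, this is precisely $i \in C(v_j)$.

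First I would use the hypothesis that $\delta_i > 0$. By item (iv), $\delta_i = \bigl[\,S^- - \sum_{k \in A \setminus i} \frac{B^-_k}{v_j}\,\bigr]^+$, so $\delta_i > 0$ forces the truncation to be inactive, giving the identity $\sum_{k \in A \setminus i} \frac{B^-_k}{v_j} = S^- - \delta_i$. This is the only place the hypothesis is used.

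Next I would record how the two sides of the target identity change across the discontinuity at $p = v_j$. Item (iv) sets $B_k(v_j) = B^-_k - v_j \delta_k$ for every $k \in A$, and it modifies allocations only for players in $A$ (players of smaller value have already exited, and $j \notin A$ by definition of $A(v_j)$); consequently $S(v_j) = S^- - \sum_{\ell \in A} \delta_\ell$, where one also uses that $S^- = s - \sum_\ell x_\ell(v_j-)$ since the finite sum defining $S$ commutes with left limits. Combining this with the identity from the first step,
$$\sum_{k \in A \setminus i} \frac{B_k(v_j)}{v_j} = \sum_{k \in A \setminus i} \frac{B^-_k}{v_j} - \sum_{k \in A \setminus i} \delta_k = \bigl(S^- - \delta_i\bigr) - \sum_{k \in A \setminus i} \delta_k = S^- - \sum_{\ell \in A} \delta_\ell = S(v_j),$$
which is exactly what we want, so $i \in C(v_j)$. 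There is no genuine difficulty here; the only thing to be careful about is the bookkeeping at the jump --- namely that precisely the players of $A(v_j)$, and no one else, change their allocation at price $v_j$ --- so that the $\sum_{\ell \in A} \delta_\ell$ appearing on the supply side cancels against the budget-side changes once the term $\delta_i$ furnished by the hypothesis is accounted for.
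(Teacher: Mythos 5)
Your proof is correct and is essentially the paper's argument in self-contained form: the paper substitutes $\delta_i = S(v_j-) - \sum_{k \in A(v_j)\setminus i} B_k(v_j-)/v_j$ into equation~(\ref{eq:supply_update}) (an identity recorded in the proof of Lemma~\ref{lemma:supply_inequality}), whereas you re-derive that same identity inline from the jump rules in item~(iv). The underlying algebra --- using $\delta_i>0$ to drop the $[\cdot]^+$ truncation and then tracking how $S$ and the $B_k$ change across the discontinuity at $p=v_j$ --- is identical in both.
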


A crucial observation for our proof is that the evolution of the profile of
remaining budgets follows a very structured format. At any given price, 
the remaining budget of an agent is either his original budget or the
maximum budget among all agents. It is instructive to observe that in  Figure
\ref{fig:run_auction}.

\begin{lemma}\label{lemma:budget_profile_format_pre}
For each price $p$, if $C(p) \neq \emptyset$, then $C(p) = \{ i\in A(p); B_i(p)
= B_*(p)\}$. 
\end{lemma}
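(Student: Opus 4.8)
The plan is to reduce everything to Lemma~\ref{lemma:supply_inequality} together with one trivial monotonicity observation; in particular I do not expect to need the Meta-Lemma here. First I would rewrite the defining condition of the clinching set more symmetrically. Set $\sigma(p) = \sum_{j \in A(p)} B_j(p)$, the total remaining budget over active players. Then for any $i \in A(p)$ one has $\sum_{j \in A(p) \setminus i} B_j(p)/p = (\sigma(p) - B_i(p))/p$, so that the condition ``$i \in C(p)$'' becomes simply $S(p) = (\sigma(p) - B_i(p))/p$. The key observation is that, at a fixed price $p > 0$, the quantity $(\sigma(p) - B_i(p))/p$ is strictly decreasing in $B_i(p)$; hence as $i$ ranges over $A(p)$ it attains its minimum value $(\sigma(p) - B_*(p))/p$ exactly at those players $i$ with $B_i(p) = B_*(p)$.

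Next I would prove the inclusion $C(p) \subseteq \{ i \in A(p); B_i(p) = B_*(p) \}$. Take $i \in C(p)$, so $S(p) = (\sigma(p) - B_i(p))/p$. Applying Lemma~\ref{lemma:supply_inequality} with any active player of maximum remaining budget gives $S(p) \le (\sigma(p) - B_*(p))/p$, and combining the two relations yields $(\sigma(p) - B_i(p))/p \le (\sigma(p) - B_*(p))/p$, i.e.\ $B_i(p) \ge B_*(p)$. Since $B_*(p)$ is by definition the maximum over active players, $B_i(p) = B_*(p)$.

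For the reverse inclusion I would use the hypothesis $C(p) \neq \emptyset$, which is exactly what is needed (without it the inclusion is false in general). Fix any $k \in C(p)$. By the inclusion just established $B_k(p) = B_*(p)$, so $S(p) = (\sigma(p) - B_k(p))/p = (\sigma(p) - B_*(p))/p$. Then for an arbitrary $i \in A(p)$ with $B_i(p) = B_*(p)$ we get $(\sigma(p) - B_i(p))/p = (\sigma(p) - B_*(p))/p = S(p)$, which is precisely the condition placing $i$ in $C(p)$. Hence $\{ i \in A(p); B_i(p) = B_*(p) \} \subseteq C(p)$, finishing the proof.

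There is no real obstacle in this argument: all of the substance sits inside Lemma~\ref{lemma:supply_inequality} (the one-sided ``supply inequality''), and the rest is bookkeeping. The two points to be slightly careful about are that the reverse inclusion genuinely consumes the assumption $C(p) \neq \emptyset$ — it is what pins the value of $S(p)$ down rather than just bounding it — and that the monotonicity step needs $p > 0$, which is harmless since the clinching condition itself involves dividing by $p$ and the hypothesis is vacuous at $p = 0$.
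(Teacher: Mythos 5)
Your proof is correct and follows essentially the same route as the paper's: both rest on the observation that the clinching condition $S(p) = \sum_{j \in A(p)\setminus i} B_j(p)/p$ depends on $i$ only through $B_i(p)$, and both invoke Lemma~\ref{lemma:supply_inequality} to show the clinching budget is the maximum; your version merely organizes the two inclusions more explicitly (and is more careful than the paper about where $C(p)\neq\emptyset$ is actually used).
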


\begin{cor}\label{cor:budget_profile_format}
 For each $i \in A(p)$, $B_i(p) = \min\{B_i(0), B_*(p)\}$.
\end{cor}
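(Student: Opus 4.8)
The plan is to prove the two inequalities $B_i(p) \le \min\{B_i(0), B_*(p)\}$ and $B_i(p) \ge \min\{B_i(0), B_*(p)\}$ separately; the first is immediate, and the second splits into two cases depending on whether player $i$ has ever clinched by price $p$.

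For the upper bound, I would observe that $B_i(\cdot)$ is non-increasing in $p$: by the definition of the auction, the remaining budget changes only via rule (ii) when $i \in C(p)$ (where $\partial_p B_i(p) = -S(p) \le 0$) or via rule (iv) when $i$ clinches a positive amount (where $B_i$ drops by $v_i \delta_i \ge 0$). Hence $B_i(p) \le B_i(0)$. Moreover, since $i \in A(p)$, the definition of $B_*(p) = \max_{j \in A(p)} B_j(p)$ gives $B_i(p) \le B_*(p)$ directly. Therefore $B_i(p) \le \min\{B_i(0), B_*(p)\}$.

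For the lower bound, first suppose player $i$ has not clinched at any price $q \le p$, neither infinitesimally (i.e.\ $\partial_q x_i(q) = 0$, equivalently $i \notin C(q)$, for all $q \le p$) nor discretely via (iv). Then by the same enumeration of budget updates, $B_i$ is constant on $[0,p]$, so $B_i(p) = B_i(0)$; combined with $B_i(p) \le B_*(p)$ this gives $B_i(0) \le B_*(p)$, hence $\min\{B_i(0), B_*(p)\} = B_i(0) = B_i(p)$. Otherwise, player $i$ clinched something at some price $q \le p$: if the clinch was infinitesimal then $i \in C(q)$ by definition of rule (ii), and if it was a discrete clinch at $q = v_j$ then $i \in C(q)$ by Corollary~\ref{cor:clinching_set_after_suddenly_clinching}. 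Either way $i \in C(q)$, and since $i \in A(p)$ forces $p < v_i$, Lemma~\ref{lemma:once_clinching_always_clinching} yields $i \in C(p)$. In particular $C(p) \neq \emptyset$, so Lemma~\ref{lemma:budget_profile_format_pre} applies and gives $B_i(p) = B_*(p)$; together with $B_i(p) \le B_i(0)$ this shows $B_*(p) \le B_i(0)$, so $\min\{B_i(0), B_*(p)\} = B_*(p) = B_i(p)$.

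I do not anticipate a genuine obstacle: this corollary is essentially bookkeeping on top of Lemmas~\ref{lemma:once_clinching_always_clinching} and~\ref{lemma:budget_profile_format_pre} and Corollary~\ref{cor:clinching_set_after_suddenly_clinching}. The only point requiring mild care is the "never clinched" case, where one must confirm that the only mechanisms altering a player's budget are (ii) and (iv) — both gated by clinching — and that the event of $i$ leaving the active set at $p = v_i$ never interferes, since we only ever consider $i \in A(p)$, i.e.\ $p < v_i$.
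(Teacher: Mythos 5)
Your proof is correct and is precisely the unpacking the paper leaves implicit (the paper gives no proof for this corollary, treating it as an immediate consequence of Lemma~\ref{lemma:budget_profile_format_pre}, Lemma~\ref{lemma:once_clinching_always_clinching}, and Corollary~\ref{cor:clinching_set_after_suddenly_clinching}). One minor wording note: the parenthetical "$\partial_q x_i(q) = 0$, equivalently $i \notin C(q)$" is not a strict equivalence if $S(q)=0$, but your case split is really on whether $i$ has ever entered $C(\cdot)$, and the argument goes through unchanged — if $i \in C(q)$ at some $q \le p$ with no budget spent, Case~2 still applies via Lemma~\ref{lemma:budget_profile_format_pre}, and if $i$ was never in $C(\cdot)$ and had no positive discrete clinch, Case~1 applies.
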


\subsection{Supply Monotonicity}

Now we are ready to prove  Theorem \ref{thm:main_theorem_multi_units} which
is our main result. For that we fix a budget profile $B$ and a valuation
profile $v$ in generic form (i.e. $v_i \neq v_j$ for $i \neq j$, which is not
needed for the proof and is mainly intended to simplify the exposition; See
Appendix~\ref{appendix:repeated-values}).  
Now, we consider two executions of the adaptive clinching auction. One with
initial supply $s^\base$ which we call the {\em base auction} and one with
initial supply $s^\aug \geq
s^\base$ which we call the {\em augmented auction}. Running the base and
augmented auction with the same valuations and budgets we get functions
$x^\base (p), B^\base (p)$ and $x^\aug (p), B^\aug (p)$. From this point on, we
use superscripts $\base$ and $\aug$ to refer to the base and augmented auctions
respectively. For the set of active players at a given price, we omit the
superscript, since $A^\base(p) = A^\aug(p)$ for all $p$. 

As the first step toward the proof of Theorem
\ref{thm:main_theorem_multi_units}, we prove
that the payments are monotone with the supply, that the final payment of each
agent in the augmented auction is higher than in the base auction:

\begin{prop}[Payment Monotonicity]\label{prop:payment_monotonicity}
 Given the base and augmented auction as defined above, then for all $p \geq 0$
and all agents $i$, $B_i^\base(p) \geq B_i^\aug(p)$.
\end{prop}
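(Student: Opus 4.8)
The plan is to prove Proposition~\ref{prop:payment_monotonicity} via Meta-Lemma~\ref{the_metalemma}, taking $\Lambda(p)$ to be the assertion ``$B_i^\base(p)\ge B_i^\aug(p)$ for every agent $i$''. Condition (a) is immediate: at $p=0$ both auctions start with $B_i^\base(0)=B_i=B_i^\aug(0)$.

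The heart of the argument is condition (b): assuming $\Lambda(p)$, I want the inequality to survive on a right-neighborhood of $p$. The first observation is that $\Lambda(p)$ forces $B_*^\base(p)=\max_{i\in A(p)}B_i^\base(p)\ge\max_{i\in A(p)}B_i^\aug(p)=B_*^\aug(p)$, and then Lemma~\ref{lemma:budget_profile_format_pre} together with Corollary~\ref{cor:budget_profile_format} give the characterization $C(p)=\{i\in A(p): B_i\ge B_*(p)\}$, whence $C^\base(p)\subseteq C^\aug(p)$. Now fix $i$. If $B_i^\base(p)>B_i^\aug(p)$ the gap persists by right-continuity (item (iii)). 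If $B_i^\base(p)=B_i^\aug(p)$ and $i\notin C^\base(p)$, then $\partial_p B_i^\base(p)=0\ge -S^\aug(p)\one[i\in C^\aug(p)]=\partial_p B_i^\aug(p)$ since $S^\aug(p)\ge 0$. Finally, if $B_i^\base(p)=B_i^\aug(p)$ and $i\in C^\base(p)\subseteq C^\aug(p)$, then Lemma~\ref{lemma:budget_profile_format_pre} gives $B_i^\base(p)=B_*^\base(p)$ and $B_i^\aug(p)=B_*^\aug(p)$, hence $B_*^\base(p)=B_*^\aug(p)$; by Corollary~\ref{cor:budget_profile_format} the two auctions then carry identical budget profiles over $A(p)$, and since $i$ clinches in both, $S^\base(p)=\sum_{j\in A(p)\setminus i}B_j^\base(p)/p=\sum_{j\in A(p)\setminus i}B_j^\aug(p)/p=S^\aug(p)$, so the two right-derivatives coincide. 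In every case the inequality is preserved just to the right of $p$.

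For condition (c), assume $\Lambda$ holds on $[0,p)$. When $p\notin\{v_1,\dots,v_n\}$ both families of functions are left-continuous at $p$ (item (iii)), so the inequality at $p$ follows by taking limits. When $p=v_j$, one first passes to the limit to get $B_k^\base(v_j-)\ge B_k^\aug(v_j-)$ for all $k$, and then has to check that the positive clinching prescribed in item (iv) does not reverse any of these inequalities: agents leaving $A(v_j)$ simply retain their $v_j-$ budgets, and for agents in $A(v_j)$ one invokes Corollary~\ref{cor:budget_profile_format} at $v_j$, which reduces the claim to $B_*^\base(v_j)\ge B_*^\aug(v_j)$.

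The step I expect to be the real obstacle is justifying the containment $C^\base(p)\subseteq C^\aug(p)$ used in (b) --- equivalently, ruling out that an agent $i$ with $B_i^\base(p)=B_i^\aug(p)$ clinches in the base auction while $C^\aug(p)=\emptyset$, in which case $\partial_p B_i^\base(p)=-S^\base(p)$ could be strictly negative while $\partial_p B_i^\aug(p)=0$, breaking the invariant. The key tool is Lemma~\ref{lemma:once_clinching_always_clinching}: if $C^\aug(p)=\emptyset$ then no currently-active agent has ever clinched in the augmented run, so every active agent still holds its \emph{original} budget there; together with the equal-budget hypothesis this pins down the base budgets too, and one then derives a contradiction from the characterization $C(p)\neq\emptyset\iff S(p)=\tfrac1p\bigl(\sum_{k\in A(p)}B_k(p)-B_*(p)\bigr)$ (a consequence of Lemmas~\ref{lemma:supply_inequality} and~\ref{lemma:budget_profile_format_pre}) by comparing $S^\base(p)$ with $S^\aug(p)$. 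The analogous bookkeeping across the jumps of item (iv) --- in particular propagating $B_*^\base\ge B_*^\aug$ through them --- is the other place where care is needed.
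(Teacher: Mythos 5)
Your overall plan -- applying Meta-Lemma~\ref{the_metalemma} to the invariant $\Lambda(p)$ that $B_i^\base(p)\ge B_i^\aug(p)$ for all $i$ -- is the same vehicle the paper uses, and your core case split in~(b) (strict inequality persists by right-continuity; equality at the maximum budget forces full coupling via Corollary~\ref{cor:budget_profile_format} and equal supplies) is essentially identical to the paper's argument. The handling of~(c) through the jumps at $v_j$ likewise matches. What differs is where the Meta-Lemma is launched: you invoke it on all of $[0,\infty)$ in one shot, whereas the paper first proves that clinching starts at least as early in the augmented auction, i.e.\ $p_0^\aug\le p_0^\base$ where $p_0^{\cdot}=\min\{p:C^{\cdot}(p)\neq\emptyset\}$, handles $[0,p_0^\aug)$ and $[p_0^\aug,p_0^\base)$ directly, and only runs the Meta-Lemma on $[p_0^\base,\infty)$, where both clinching sets are guaranteed nonempty.

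That structural choice is what makes the difference, and it is precisely where your proof has a gap, which you correctly flag. Your argument needs $C^\base(p)\subseteq C^\aug(p)$, and the only hard case is $i\in C^\base(p)$, $B_i^\base(p)=B_i^\aug(p)$, $C^\aug(p)=\emptyset$. You observe via Lemma~\ref{lemma:once_clinching_always_clinching} that then every \emph{currently active} agent in the augmented run still holds its original budget; but to compare $S^\aug(p)$ with $S^\base(p)$ you also need to account for goods acquired by agents who are \emph{inactive} at $p$ but clinched earlier. Nothing in your sketch rules out $S^\aug(p)<S^\base(p)$ arising from such past allocations, so the advertised contradiction does not close. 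To make it close you would need a lemma of the flavor ``once the clinching set becomes nonempty it stays nonempty while at least two agents are active'' -- which, combined with $C^\aug(p)=\emptyset$, would force $C^\aug(p')=\emptyset$ for all $p'\le p$ and hence $S^\aug(p)=s^\aug$, giving $s^\aug=S^\aug(p)<\sum_{l\in A(p)\setminus i}B_l^\aug(p)/p\le\sum_{l\in A(p)\setminus i}B_l^\base(p)/p=S^\base(p)\le s^\base$, a contradiction. That auxiliary lemma is not in the paper's toolkit and is not trivial. The paper avoids it altogether by anchoring the argument at $p_0^\base$: if $p_0^\base<p_0^\aug$ then by \emph{definition} of $p_0^\aug$ nothing has been allocated or paid in the augmented run either, so $S^\aug(p_0^\base)=s^\aug>s^\base=S^\base(p_0^\base)=\sum_{k\in A(p_0^\base)\setminus i}B_k(0)/p_0^\base$, contradicting Lemma~\ref{lemma:supply_inequality}. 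This is the one genuinely missing idea in your write-up; the rest of your outline is sound and lines up with the paper's.
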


The full proof of this proposition is delicate and involves keeping track of
many invariants as the allocation and budget profiles evolve with prices. We
delay the proof to Appendix \ref{appendix:proof_budget_monotonicity_prop}. Here
we present a warm-up to the proof that deals with the case where all values
$v_i$ are very large.  This special case will highlight the core of the proof
which is essentially a coupling argument. Also, we keep the discussion here
informal and delay the formal arguments to Appendix
\ref{appendix:proof_budget_monotonicity_prop}.\columnsversion{}{\\}

\begin{warmup}
 If valuations are large compared to budgets, $C^\aug(p) = C^\base(p)
= [n]$ for some $p < \min_i v_i$ and once $\bar{p} = \min_i v_i$ is reached,
some player $i$ becomes inactive and all the other players $j \neq i$, clinch
their entire demand $\delta_j = \frac{B_j(\bar{p} -)}{\bar{p}}$.

This case is nice because it allows us to ignore part (iv) of Definition
\ref{defn:adaptive_clinching_auction} and simply analyze the continuous
function defined in $[0,\bar{p})$ by part (ii) of the definition. We start by
defining $p_0^\aug = \min\{p; C^\aug(p) \neq \emptyset\}$ and  $p_0^\base =
\min\{p; C^\base(p) \neq \emptyset\}$. Since the supply is larger in the
augmented auction, $p_0^\aug < p_0^\base$. Then we can divide the analysis in
three intervals: in the interval $[0, p_0^\aug)$ where no player is clinching,
so budgets are constant, i.e., equal to the initial budget. In the interval
$[p_0^\aug, p_0^\base)$ no player is clinching in the base auction but some are
clinching in the augmented auction, so clearly the remaining budgets are larger
in the base auction. For the remaining interval $[p_0^\base, \infty)$ we can
use Corollary \ref{cor:budget_profile_format} to see that all we need to show
is that since both clinching sets are non-empty we just need to prove that
$B_*^\aug(p) \leq B_*^\base(p)$ for all $p \in [p_0^\base, \infty)$. This is
true for $p = p_0^\base$ by continuity of $B_*^\aug$ and $B_*^\base$ (since $p
< \min_i v_i$). Now we argue that if
$B_*^\aug(p) \leq B_*^\base(p)$ for some $p \geq p_0^\base$, then $B_*^\aug(p')
\leq B_*^\base(p')$ for $p' \in [p,p+\epsilon)$ for some $\epsilon > 0$. Then
we invoke the Meta-Lemma to extend this to all $p \geq p_0^\base$. Now, in
order to prove that we analyze two cases: if $B_*^\aug(p) < B_*^\base(p)$ then
by continuity of $B_*^\aug$ and $B_*^\base$, there exists some
$\epsilon > 0$ such that $B_*^\aug(p') < B_*^\base(p')$ for $p' \in
[p,p+\epsilon)$. Now, if $B_*^\aug(p) = B_*^\base(p)$, then by
Corollary \ref{cor:budget_profile_format}, $B_i^\aug(p) = B_i^\base(p)$ for all
agents $i$ and moreover, the remnant supply is the same $S^\aug(p) =
S^\base(p)$, since $S(p) = \sum_{i \in A(p)} \frac{B_i(p)}{p} -
\frac{B_*(p)}{p}$. Notice that the evolution of budgets in $p' \geq p$ depends
only on $B(p)$ and $S(p)$ and since those are equal for the augmented and for
the base auction, then $B^\aug_i(p') = B^\base_i(p')$ for all $p' \geq p$. We
say that at this point, the auctions get {\em fully coupled}. This completes
the discussion. The heart of the proof is to show that the maximum
budget of the base auction stays higher then the one in the augmented auction.
If eventually they meet, then the two auctions become {\em fully coupled}, in
the sense that they evolve in the same way from this price on.
\end{warmup}

Now, we want to establish allocation monotonicity, i.e., that $x_i^\aug(p) \geq
x_i^\base(p)$ for all $p \geq 0$. We will prove a stronger claim, that the {\em
wishful allocation} $\Psi_i$ is monotone in the supply, i.e., $\Psi_i^\aug(p)
\geq \Psi_i^\base(p)$ for all $p \geq 0$. 
\columnsversion{}{\\}

\begin{prop}[Allocation Monotonicity]\label{prop:allocation_monotonicity}
 For all $p \geq 0$
and all agents $i$, the following invariant holds: $\Psi^\base_i(p) \leq
\Psi_i^\aug(p)$.
\end{prop}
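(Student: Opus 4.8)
The plan is to apply the Meta-Lemma to the invariant $\Lambda(p)$: ``$\Psi_i^\aug(p) \geq \Psi_i^\base(p)$ for all agents $i$.'' Part (a) is immediate: at $p = 0$ we have $\Psi_i^\aug(0) = x_i^\aug(0) + B_i/0$... more carefully, for small $p$ before any clinching happens, $x_i = 0$ and $B_i(p) = B_i$ in both auctions, so $\Psi_i^\aug(p) = B_i/p = \Psi_i^\base(p)$; one should start the induction at the first price $p_0^\aug$ where clinching begins (or handle the degenerate initial segment directly), where equality still holds by continuity of $\Psi_i$ (Lemma~\ref{lemma:wishful_allocation_properties}). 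Part (c), the limit step, will use the continuity of $\Psi_i^\aug$ and $\Psi_i^\base$ established in Lemma~\ref{lemma:wishful_allocation_properties}: since $\Psi_i$ is continuous \emph{everywhere} (the wishful allocation was designed precisely so that the jumps in $x_i$ and $B_i$ at the points $v_j$ cancel), taking limits $p' \uparrow p$ preserves the weak inequality with no special treatment needed at the breakpoints $v_j$. This is the payoff of proving monotonicity for $\Psi$ rather than directly for $x$.

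The substance is in part (b): assuming $\Psi_i^\aug(p) \geq \Psi_i^\base(p)$ for all $i$ at some price $p$, I want a neighborhood $[p, p+\epsilon)$ on which it persists. By Lemma~\ref{lemma:wishful_allocation_properties}, $\partial_p \Psi_i(p) = -B_i(p)/p^2$ in both auctions, so it suffices to control the right-derivatives: if $\Psi_i^\aug(p) > \Psi_i^\base(p)$ strictly, continuity gives a neighborhood where the strict inequality survives; if $\Psi_i^\aug(p) = \Psi_i^\base(p)$, I need $\partial_p \Psi_i^\aug(p) \geq \partial_p \Psi_i^\base(p)$, i.e. $B_i^\aug(p) \leq B_i^\base(p)$. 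But that is exactly Proposition~\ref{prop:payment_monotonicity}, which we are entitled to assume. So the derivative comparison holds whenever the values are equal, and a standard real-analysis argument (comparing $\Psi_i^\aug - \Psi_i^\base$, which is right-differentiable with nonnegative right-derivative at every point where it vanishes) yields the one-sided persistence on $[p, p+\epsilon)$. One subtlety: when several agents are involved, the ``tie'' set can change, but since $\partial_p \Psi_i$ depends only on $B_i(p)$ and $p$ (not on who is clinching), no case analysis on clinching sets is needed here — this is the other reason $\Psi$ is the right object to track.

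I expect the main obstacle to be purely bookkeeping rather than conceptual: making precise the claim in part~(b) that a right-differentiable function $g = \Psi_i^\aug - \Psi_i^\base$ with $g(p) \geq 0$, and with $\partial_p g(p) \geq 0$ whenever $g(p) = 0$, stays nonnegative on a right-neighborhood. This is an elementary lemma, but it requires either invoking a Dini-derivative comparison theorem or running a small auxiliary argument (e.g. if $g$ dipped below $0$, pick the infimum of the bad set, use right-continuity to get $g = 0$ there, then the positive right-derivative contradicts immediate descent). Everything else — parts (a) and (c), and the reduction of the derivative inequality to Proposition~\ref{prop:payment_monotonicity} — is routine once one commits to proving the statement for the wishful allocation. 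Since $\Psi_i^\aug(p) \geq x_i^\aug(p)$ and $\Psi$ decreases monotonically to the final allocation, allocation monotonicity $x_i^\aug \geq x_i^\base$ then follows by taking $p \to \infty$, where $\Psi_i$ and $x_i$ coincide.
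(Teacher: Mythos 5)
Your overall plan is sound, and it rests on exactly the same two ingredients the paper uses: Proposition~\ref{prop:payment_monotonicity} and the right-derivative formula in Lemma~\ref{lemma:wishful_allocation_properties}. However, the auxiliary lemma you invoke in step~(b) is \emph{false} as stated, and this is a real gap, not mere bookkeeping. You claim: a right-differentiable $g$ with $g(p)\geq 0$ and $\partial_p g(q)\geq 0$ \emph{whenever} $g(q)=0$ stays nonnegative on a right-neighborhood of $p$. Take $g(q)=-q^2$ on $[0,\infty)$: $g(0)=0$, the only zero of $g$ is $q=0$, and there $\partial_q g(0)=0\geq 0$, yet $g$ is immediately negative. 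Your proposed ``small auxiliary argument'' (take the infimum of the bad set, derive a contradiction from the right-derivative) also fails on this example, because a nonnegative right-derivative at the last good point is compatible with immediate descent. The conditional form ``$\geq 0$ at ties'' simply is not strong enough.

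The fix is small and you actually have the needed fact in hand: Proposition~\ref{prop:payment_monotonicity} gives $B_i^\aug(p)\leq B_i^\base(p)$ for \emph{all} $p$, not only at ties, so by Lemma~\ref{lemma:wishful_allocation_properties} the function $g=\Psi_i^\aug-\Psi_i^\base$ has nonnegative right-derivative $\frac{B_i^\base(p)-B_i^\aug(p)}{p^2}$ \emph{everywhere}. A continuous function with nonnegative right-derivative everywhere is nondecreasing; since $g\to 0$ as $p$ decreases toward the start of clinching, $g\geq 0$ throughout. The paper dispenses with the Meta-Lemma entirely and records this as a one-line integral inequality: for the starting price $p$ one has $\Psi^\aug_i(p)=\Psi^\base_i(p)=B_i/p$, and then
$$\Psi^\aug_i(p')=\Psi^\aug_i(p)-\int_p^{p'}\frac{B_i^\aug(\rho)}{\rho^2}\,d\rho\geq \Psi^\base_i(p)-\int_p^{p'}\frac{B_i^\base(\rho)}{\rho^2}\,d\rho=\Psi^\base_i(p').$$
Your route is a workable alternative once the derivative comparison is strengthened to hold at all $p$, but the paper's direct integral argument is both shorter and immune to the subtlety you ran into, since it never needs a Dini-type propagation step. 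Your closing observation, that $x_i^\aug\geq x_i^\base$ then follows by letting $p\to\infty$, matches the paper; the paper additionally notes this also follows at any finite $p$ by subtracting $B_i^\aug(p)/p\leq B_i^\base(p)/p$ from both sides of the wishful-allocation inequality.
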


\begin{proof}
 This proof follows from combining Proposition
\ref{prop:payment_monotonicity} and Lemma
\ref{lemma:wishful_allocation_properties}. For small values of $p$,
$\Psi^\base_i(p) \leq \Psi_i^\aug(p)$ is definitely true, since both are equal
to $\frac{B_i}{p}$. Now, if it is true for some small $p$, then it is true for
any $p' \geq p$, since:
\columnsversion{$$ \Psi^\aug_i(p') = \Psi^\aug_i(p) - \int_p^{p'}
\frac{B_i^\aug(\rho)}{\rho^2}
d\rho \geq \Psi^\base_i(p) - \int_p^{p'} \frac{B_i^\base(\rho)}{\rho^2} d\rho =
\Psi^\base_i(p').$$}{$$\begin{aligned} \Psi^\aug_i(p') & = \Psi^\aug_i(p) -
\int_p^{p'}
\frac{B_i^\aug(\rho)}{\rho^2}
d\rho \geq \\ & \geq \Psi^\base_i(p) - \int_p^{p'}
\frac{B_i^\base(\rho)}{\rho^2} d\rho = \Psi^\base_i(p'). \end{aligned}$$}
\end{proof}

The proof of our main theorem follows immediately from Propositions
\ref{prop:payment_monotonicity} and
\ref{prop:allocation_monotonicity}.\columnsversion{}{\\}

\begin{proofof}{Theorem \ref{thm:main_theorem_multi_units}}
For the allocation monotonicity, Proposition \ref{prop:allocation_monotonicity}
implies that
$x_i^\aug(p) + \frac{B_i^\aug(p)}{p} \geq x_i^\base(p) +
\frac{B_i^\base(p)}{p}$. Since $B_i^\aug(p) \leq B_i^\base(p)$, then clearly:
$x_i^\aug(p) \geq x_i^\base(p)$, taking $p \rightarrow \infty$ we get that for
each player $i$, the final allocation in the augmented auction and in the base
auction are such that $x_i^\aug \geq x_i^\base$.

The monotonicity of the payment function follows directly from Proposition
\ref{prop:payment_monotonicity}. The remaining budget in the end is larger in
the base auction then in the augmented auction for each agent. So, the final
payments are such that $\pi_i^\aug \geq \pi_i^\base$.

\end{proofof}

\section{Qualitative Description of the Adaptive Clinching
Auction}\label{sec:qualitative}

Once we established that the Adaptive Clinching Auction is an auction in the
online supply model, we have a feasible online allocation and pricing rule in
our hands,
i.e., a rule that tells how to allocate and charge for an $\epsilon$ amount of
the good when it arrives after $s$ supply has already been allocated. At this
point, it is worth studying qualitative
behavior of such an allocation rule, and develop more insights
that can 
guide us in the design of heuristic to apply in real-world ad auctions.

We do this analysis in details for $2$ players in Appendix
\ref{appendix:closed_form_2}. We observe that for the first units of supply,
the clinching auction behaves like VCG: allocating to the highest bidder and
charging the second highest bid. After that, depending on the relation between
$v_1$ and $v_2$, two distinct behaviors can happen: either at a certain point
the high value player gets his budget depleted, and the auction starts
allocating new arriving units to the low valued player (still charging for
those items) and then at a certain
further point, it starts splitting the goods among them (charging only the player
with non-depleted budget). Or alternatively, the
auction continues to allocate to the high-valued player but start charging him
a discounted version of VCG. Then when his budget gets depleted, the auction
starts splitting goods among the players (charging only the player with
non-depleted budget). We refer to Appendix \ref{appendix:closed_form_2} for a
detailed discussion of the intuition behind this online rule.

\bibliographystyle{abbrv}
\bibliography{sigproc}  

\begin{thebibliography}{10}

\bibitem{AggarwalGKM11}
G.~Aggarwal, G.~Goel, C.~Karande, and A.~Mehta.
\newblock Online vertex-weighted bipartite matching and single-bid budgeted
  allocations.
\newblock In {\em SODA}, pages 1253--1264, 2011.

\bibitem{Ausubel_multi}
L.~M. Ausubel.
\newblock An efficient ascending-bid auction for multiple objects.
\newblock {\em American Economic Review}, 94, 1997.

\bibitem{babaioff10}
M.~Babaioff, L.~Blumrosen, and A.~Roth.
\newblock Auctions with online supply.
\newblock In {\em ACM Conference on Electronic Commerce}, pages 13--22, 2010.

\bibitem{Bhattacharya10}
S.~Bhattacharya, V.~Conitzer, K.~Munagala, and L.~Xia.
\newblock Incentive compatible budget elicitation in multi-unit auctions.
\newblock In {\em SODA}, pages 554--572, 2010.

\bibitem{BJN07}
N.~Buchbinder, K.~Jain, and J.~Naor.
\newblock Online primal-dual algorithms for maximizing ad-auctions revenue.
\newblock In {\em ESA}, pages 253--264, 2007.

\bibitem{henzinger11}
R.~Colini-Baldeschi, M.~Henzinger, S.~Leonardi, and M.~Starnberger.
\newblock On multiple round sponsored search auctions with budgets.
\newblock {\em AdAuctions Workshop}, 2012.

\bibitem{devanur09}
N.~R. Devanur and J.~D. Hartline.
\newblock Limited and online supply and the bayesian foundations of prior-free
  mechanism design.
\newblock In {\em ACM EC}, pages 41--50, 2009.

\bibitem{DH09}
N.~R. Devenur and T.~P. Hayes.
\newblock The adwords problem: online keyword matching with budgeted bidders
  under random permutations.
\newblock In {\em ACM Conference on Electronic Commerce}, pages 71--78, 2009.

\bibitem{dobzinski12}
S.~Dobzinski, R.~Lavi, and N.~Nisan.
\newblock Multi-unit auctions with budget limits.
\newblock {\em Games and Economic Behavior}, 74(2):486--503, 2012.

\bibitem{fiat_clinching}
A.~Fiat, S.~Leonardi, J.~Saia, and P.~Sankowski.
\newblock Single valued combinatorial auctions with budgets.
\newblock In {\em ACM Conference on Electronic Commerce}, pages 223--232, 2011.

\bibitem{goel12}
G.~Goel, V.~S. Mirrokni, and R.~{Paes Leme}.
\newblock Polyhedral clinching auctions and the adwords polytope.
\newblock In {\em STOC}, pages 107--122, 2012.

\bibitem{Mahdian2006}
M.~Mahdian and A.~Saberi.
\newblock Multi-unit auctions with unknown supply.
\newblock In {\em Proceedings of the 7th ACM conference on Electronic
  commerce}, EC '06, pages 243--249, New York, NY, USA, 2006. ACM.

\bibitem{mcdiarmid75}
C.~J.~H. McDiarmid.
\newblock Rado's theorem for polymatroids.
\newblock {\em Mathematical Proceedings of the Cambridge Philosophical
  Society}, 78(02):263--281, 1975.

\bibitem{jacmMehtaSVV07}
A.~Mehta, A.~Saberi, U.~V. Vazirani, and V.~V. Vazirani.
\newblock Adwords and generalized online matching.
\newblock {\em J. ACM}, 54(5), 2007.

\bibitem{myerson-81}
R.~Myerson.
\newblock Optimal auction design.
\newblock {\em Mathematics of Operations Research}, 6(1):58--73, 1981.

\bibitem{parkes07a}
D.~C. Parkes.
\newblock Online mechanisms.
\newblock In N.~Nisan, T.~Roughgarden, E.~Tardos, and V.~Vazirani, editors,
  {\em Algorithmic Game Theory}, chapter~16, pages 411--439. Cambridge
  University Press, 2007.

\end{thebibliography}
% You must have a proper ".bib" file
%  and remember to run:
% latex bibtex latex latex
% to resolve all references
%
% ACM needs 'a single self-contained file'!
%
%APPENDICES are optional
%\balancecolumns
\appendix

\section{Missing Proofs in Sections \ref{sec:online_supply_model} and
\ref{sec:multi-unit-auctions} }\label{appendix:missing_proofs}

\begin{lemma}\label{lemma:polymatroidal_online_supply}
 VCG is an auction in the strong online model for
$(\F^{\text{PM}}, \preccurlyeq^{\text{PM}})$.
\end{lemma}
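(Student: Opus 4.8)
The plan is to exploit the well-known fact that VCG on a polymatroid implements the greedy allocation: given values sorted $v_{\sigma(1)} > v_{\sigma(2)} > \cdots$, VCG allocates to player $\sigma(k)$ the marginal increment $f(\{\sigma(1),\dots,\sigma(k)\}) - f(\{\sigma(1),\dots,\sigma(k-1)\})$, where $f$ is the rank function of the polymatroid $P^f$. So I would start by fixing a valuation profile $v$, and without loss of generality relabel players so that $v_1 > v_2 > \cdots > v_n$ (the case of ties can be handled by an arbitrary fixed tie-break, or one can restrict to generic $v$ as the paper does elsewhere). Let $f \preccurlyeq^{\text{PM}} f'$, so $P^{f'} = P^f + P'$ for some polymatroid $P'$; write $g, g', h$ for the rank functions of $P^f, P^{f'}, P'$ respectively, and recall that the rank function of a Minkowski sum of polymatroids is the sum of the rank functions, i.e. $g' = g + h$.

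The key step is allocation monotonicity. Writing $U_k = \{1,\dots,k\}$, the greedy/VCG allocation gives $x^f_k(v) = g(U_k) - g(U_{k-1})$ and $x^{f'}_k(v) = g'(U_k) - g'(U_{k-1}) = \big(g(U_k) - g(U_{k-1})\big) + \big(h(U_k) - h(U_{k-1})\big)$. Since $h$ is monotone, $h(U_k) - h(U_{k-1}) \geq 0$, hence $x^{f'}_k(v) \geq x^f_k(v)$ for every $k$, which is exactly $x^{f'} \geq x^f$. For payment monotonicity I would use Myerson's payment identity: with the sorted order fixed, the allocation $x^f_k$ as a function of player $k$'s own value is a step function (player $k$'s allocation jumps at the thresholds $v_{k-1}, v_{k+1}, \dots$ as $k$ overtakes or is overtaken), and $\pi^f_k = \sum_{j} (\text{threshold}_j)\cdot(\text{jump in } x^f_k \text{ at threshold}_j)$. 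Because every jump of $x^f_k$ is replaced in the $f'$ problem by a jump that is at least as large and at the same threshold value (each jump being a marginal-rank difference $g(\cdot)-g(\cdot)$ that becomes $g'(\cdot)-g'(\cdot)$, differing by a nonnegative $h$-marginal), and all thresholds are the values $v_j$ which do not change, we get $\pi^{f'}_k \geq \pi^f_k$. Finally, incentive-compatibility, individual rationality and efficiency of VCG for each fixed environment $P^f$ are standard, so the family $\{(x^f,\pi^f)\}$ satisfies all three properties in the strong online model.

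The main obstacle I anticipate is making the payment argument fully rigorous: one must argue carefully that the threshold prices in the Myerson formula for player $k$ really are unchanged between the two environments, and that the increments in player $k$'s allocation function at each threshold remain nonnegative after adding $h$ — this amounts to checking that the greedy allocation's dependence on $v_k$ is, at each breakpoint, a difference of marginal rank values of the same two nested sets in both polymatroids, so the extra term is a single $h$-marginal $h(U_\ell)-h(U_{\ell-1}) \geq 0$. One clean way to sidestep delicate case analysis is to note that the map $v \mapsto x^{f'}(v) - x^f(v)$ is itself exactly the greedy/VCG allocation for the polymatroid $P'$ with rank function $h$, so $\pi^{f'} - \pi^f$ is the VCG payment vector for $P'$ (Myerson payments are linear in the allocation rule), and VCG payments for a polymatroid are nonnegative; this reduces payment monotonicity to the nonnegativity of VCG prices on a polymatroid, which is immediate.
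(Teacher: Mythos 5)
Your final ``clean way''---decomposing the VCG outcome on $P^{f'} = P^f + P'$ into the sum of the VCG outcomes on $P^f$ and $P'$, using additivity of polymatroid rank functions under Minkowski sum, and concluding from nonnegativity of VCG allocations and payments on $P'$---is exactly the paper's argument (the paper invokes McDiarmid's theorem for the rank-function additivity and writes out the explicit VCG allocation and payment formulas to exhibit the decomposition termwise). Your earlier Myerson-threshold route is a more laborious detour that you correctly discard in favor of the decomposition, so the proposal is correct and takes essentially the same approach as the paper.
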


\begin{proof}
Assume that $P^{f}, P^{f'}, P'$ are defined respectively
by the monotone submodular functions $f,f',g$. If $P^{f'} = P^f + P'$, then by
McDiarmid's Theorem \cite{mcdiarmid75}, $f' = f+g$.

Now, let's remind how VCG allocated for this setting. If the polymatroid is
defined by $f$, VCG begins by sorting the players by their value (and breaking
ties lexicographically). So, we can assume $v_1 \geq \hdots \geq v_n$. Then it
chooses the outcome:
$$x_i = f([i])-f([i-1])$$
\columnsversion{$$\pi_i = v_{i+1} \cdot (f([i+1]\setminus i) -f([i-1]) -
x_{i+1}) +\sum_{j>i+1} v_j \cdot (f([j]\setminus i) -f([j-1]\setminus i) - x_j)
$$}{$$\begin{aligned}\pi_i & = v_{i+1} \cdot (f([i+1]\setminus i) -f([i-1]) -
x_{i+1}) + \\ & \qquad + \sum_{j>i+1} v_j \cdot (f([j]\setminus i)
-f([j-1]\setminus i) - x_j)
\end{aligned}$$}
where $[i]$ is an abbreviation for $\{1, \hdots, i\}$. Now, once we do this
for $f,f'$ we notice that the allocation and payments for $f'$ are simply the
sum of the allocation and payments for $f$ and $g$, hence they are monotone
along $\preccurlyeq^{\text{PM}}$.
\end{proof}

\begin{proofof}{Lemma \ref{lemma:wishful_allocation_properties}}
 The function $\Psi_i(p)$ is clearly continuous for $p \notin \{v_1, \hdots,
v_n\}$ and right-continuous everywhere. Now, we claim that it is also
left-continuous at $v_j$, i.e., $\Psi(v_j -) = \Psi(v_j)$. This fact is almost
immediate:
\columnsversion{ $$\Psi_i(v_j) = x_i(v_j) + \frac{B_i(v_j)}{v_j} = [x_i(v_j -) +
\delta_i] + \frac{[B_i(v_j-) - \delta_i v_j]}{v_j} = x_i(v_j -) + \frac{B_i(v_j
-)}{v_j} = \Psi_i(v_j -)$$}{$$\begin{aligned}\Psi_i(v_j) & = x_i(v_j) +
\frac{B_i(v_j)}{v_j} = \\ & = [x_i(v_j -) +
\delta_i] + \frac{[B_i(v_j-) - \delta_i v_j]}{v_j} =\\ & = x_i(v_j -) +
\frac{B_i(v_j
-)}{v_j} = \Psi_i(v_j -)\end{aligned}$$ }
Calculating its derivative is also easy:
\columnsversion{ $$\partial_p \Psi_i(p) = \partial_p \left[ x_i(p) +
\frac{B_i(p)}{p} \right] = \partial_p x_i(p) + \frac{\partial_p B_i(p)}{p} -
\frac{B_i(p)}{p^2} = - \frac{B_i(p)}{p^2}$$}{$$\begin{aligned}\partial_p
\Psi_i(p) & = \partial_p \left[ x_i(p) + \frac{B_i(p)}{p} \right] = \\ & =
\partial_p x_i(p) +
\frac{\partial_p B_i(p)}{p} - \frac{B_i(p)}{p^2} = - \frac{B_i(p)}{p^2}
\end{aligned}$$}
since $\partial_p x_i(p) = \frac{S(p)}{p} = -\frac{\partial_p B_i(p)}{p}$.
\end{proofof}

\begin{proofof}{ the Meta Lemma \ref{the_metalemma}}
 Let $F = \{p\geq p_0; \Lambda \text{ doesn't hold for } p\}$.  We want  to
show
that if the properties (a),(b),(c) in the statement hold, then $F = \emptyset$.
Assume for
contradiction that (a),(b),(c) hold but $F \neq \emptyset$. Let $\bar{p} = \inf
F$, i.e., the smallest $\bar{p}$ such that for all $\epsilon > 0$,
$[\bar{p},\bar{p}+\delta) \cap F \neq \emptyset$ for all $\delta > 0$.

Now, there are two possibilities:

(1) either $\bar{p} \notin F$, in this case
we can invoke (b) to see that there should be an $\epsilon > 0$ such that
$[\bar{p},\bar{p}+\epsilon) \cap F = \emptyset$ which contradicts the fact that
$\bar{p} = \inf F$.

(2) or $\bar{p} \in F$. By (a), we know $\bar{p} > p_0$. Then
we can use that by the definition of $\inf$, $\Lambda$ holds for all $p <
\bar{p}$, so we can invoke (c) to show that $\Lambda$ should hold for
$\bar{p}$. And again we arrive in a contradiction.
\end{proofof}

\begin{proofof}{Lemma \ref{lemma:once_clinching_always_clinching}}
 The proof is based on the Meta Lemma. Part (a) is trivial.

 For part (b), there is $\epsilon > 0$ such that in $[p,p+\epsilon)$ the active
set is the same as $A(p)$. We will show that if $i \in C(p)$, then $i \in
C(p')$ for all $p' \in [p, p+\epsilon)$, or in other words: $S(p') = \sum_{j
\in A(p) \setminus i} \frac{B_j}{p'}$. This equality holds for $p$. Now, we
will simply show that the derivative of both sides is the same in the
$[p,p+\epsilon)$ interval,
i.e.: $\partial_p  S(p) = \partial_p\sum_{j\in A(p) \setminus i}
\frac{B_j(p)}{p}$.

\columnsversion{$$ \begin{aligned}
\partial_p\sum_{j\in A(p) \setminus i} \frac{B_j(p)}{p} & =
\frac{p[\sum_{j \in A(p) \setminus i} \partial_p B_j(p)] - \sum_{j \in A(p)
\setminus i}  B_j(p) }{p^2} = \frac{1}{p} \sum_{j \in C(p) \setminus i}
\partial_p B_j(p) - \frac{1}{p} S(p) = \\
&  = - \sum_{j \in C(p) \setminus i}
\frac{1}{p}S(p) - \frac{1}{p} S(p) = -\sum_{j \in C(p)} \partial_p x_j(p) =
-\sum_{j \in A(p)} \partial_p x_j(p) = \partial_p S(p)
\end{aligned}$$}{
$$ \begin{aligned}
& \partial_p\sum_{j\in A(p) \setminus i}  \frac{B_j(p)}{p}  = \\ & \quad =
\frac{p[\sum_{j \in A(p) \setminus i} \partial_p B_j(p)] - \sum_{j \in A(p)
\setminus i}  B_j(p) }{p^2} = \\ & \quad = \frac{1}{p} \sum_{j \in C(p)
\setminus i}
\partial_p B_j(p) - \frac{1}{p} S(p) = \\
&  \quad = - \sum_{j \in C(p) \setminus i}
\frac{1}{p}S(p) - \frac{1}{p} S(p) \\ & \quad = -\sum_{j \in C(p)} \partial_p
x_j(p) = -\sum_{j \in A(p)} \partial_p x_j(p) = \partial_p S(p)
\end{aligned}$$}

 For part (c), it is trivial for $p \notin \{v_1, \hdots, v_n\}$ by
left-continuity: if $S(p') = \sum_{j \in A(p') \setminus i} \frac{B_j}{p'}$
for $p' < p$ and the functions involved are left-continuous, then it holds for
$p$. Now, for $p = v_j$, if $S(v_j -) = \sum_{k \in A(v_j -) \setminus i}
\frac{B_k(v_j -)}{v_j}$, then for $\delta_k$ as defined in (iii) of Definition
\ref{defn:adaptive_clinching_auction} we have:
\columnsversion{$$S(v_j)  = S(v_j -) - \sum_{k \in A(v_j)} \delta_k = \left[
\sum_{k \in A(v_j)\setminus i} \frac{B_k(v_j -) - \delta_k v_j}{v_j} \right] +
\frac{B_j(v_j -)}{v_j}- \delta_i = \sum_{k \in A(v_j)\setminus i} \frac{B_k(v_j
)}{v_j}$$}{$$\begin{aligned} S(v_j)  & = S(v_j -) - \sum_{k \in A(v_j)} \delta_k
= \\ & = \left[
\sum_{k \in A(v_j)\setminus i} \frac{B_k(v_j -) - \delta_k v_j}{v_j} \right] +
\frac{B_j(v_j -)}{v_j}- \delta_i = \\ & = \sum_{k \in A(v_j)\setminus i}
\frac{B_k(v_j)}{v_j} \end{aligned}$$}
since:
\columnsversion{$$\delta_i = \left[ S(v_j -) - \sum_{k \in A(v_j)\setminus i}
\frac{B_k(v_j
-)}{v_j} \right]^+ = \left[ \sum_{k \in A(v_j -)\setminus i} \frac{B_k(v_j
-)}{v_j} - \sum_{k \in A(v_j)\setminus i} \frac{B_k(v_j
-)}{v_j} \right]^+ = \frac{B_j(v_j -)}{v_j}$$}{
$$\begin{aligned} \delta_i & = \left[ S(v_j -) - \sum_{k \in A(v_j)\setminus i}
\frac{B_k(v_j
-)}{v_j} \right]^+ = \\ & = \left[ \sum_{k \in A(v_j -)\setminus i}
\frac{B_k(v_j
-)}{v_j} - \sum_{k \in A(v_j)\setminus i} \frac{B_k(v_j
-)}{v_j} \right]^+ = \\ & = \frac{B_j(v_j -)}{v_j} \end{aligned}$$}
\end{proofof}

\begin{proofof}{Lemma \ref{lemma:supply_inequality}}
 Again we prove it using the Meta Lemma. (a) is trivial, for (b) if $S(p) <
\sum_{j \in A(p) \setminus i} \frac{B_j(p)}{p}$, then by right-continuity the
strict inequality continues to hold in some region $[p,p+\epsilon)$. If $S(p) =
\sum_{j \in A(p) \setminus i} \frac{B_j(p)}{p}$ we can do the same analysis as
in Lemma \ref{lemma:once_clinching_always_clinching}. For (c) it is again
trivial for $p \notin \{v_1, \hdots, v_n\}$ by left-continuity and for $p =
v_j$ we use the fact that comes directly from the proof of the
previous lemma:
\columnsversion{\begin{equation}\label{eq:supply_update}
 S(v_j) - \sum_{k \in A(v_j)\setminus i} \frac{B_k(v_j
)}{v_j} = \left[ S(v_j -) - \sum_{k \in A(v_j -)\setminus i} \frac{B_k(v_j
-)}{v_j} \right] + \frac{B_j(v_j-)}{v_j} - \delta_i \leq 0
\end{equation}}{\begin{equation}\label{eq:supply_update}
\begin{aligned} 
& S(v_j) - \sum_{k \in A(v_j)\setminus i} \frac{B_k(v_j
)}{v_j} = \\ & \quad = \left[ S(v_j -) - \sum_{k \in A(v_j -)\setminus i}
\frac{B_k(v_j
-)}{v_j} \right] + \\ & \quad \quad + \frac{B_j(v_j-)}{v_j} - \delta_i \leq 0
\end{aligned}
\end{equation}}
by the definition of $\delta_i$, since:
\columnsversion{
$$\delta_i = \left[ S(v_j -) - \sum_{k \in A(v_j) \setminus i} \frac{B_k(v_j
-)}{v_j} \right]^+ \geq \left[ S(v_j -) - \sum_{k \in A(v_j -)\setminus i}
\frac{B_k(v_j
-)}{v_j} \right] + \frac{B_j(v_j-)}{v_j} $$}{
$$\begin{aligned} \delta_i & = \left[ S(v_j -) - \sum_{k \in A(v_j) \setminus i}
\frac{B_k(v_j
-)}{v_j} \right]^+ \geq \\ &  \geq \left[ S(v_j -) - \sum_{k \in A(v_j
-)\setminus i}
\frac{B_k(v_j
-)}{v_j} \right] + \frac{B_j(v_j-)}{v_j} \end{aligned}$$}
\end{proofof}

\begin{proofof}{Corollary \ref{cor:clinching_set_after_suddenly_clinching}}
 If $\delta_i > 0$, then, $\delta_i =  S(v_j -) - \sum_{k \in A(v_j)
\setminus i} \frac{B_k(v_j -)}{v_j} $. Substituting that in equation
(\ref{eq:supply_update}) we get that $S(v_j) = \sum_{k \in A(v_j)\setminus i}
\frac{B_k(v_j )}{v_j} $ and therefore $i \in C(v_j)$.
\end{proofof}

\begin{proofof}{Lemma \ref{lemma:budget_profile_format_pre}}
 It is easy to see that all bidders in the clinching set have the same
remaining budget, since if $i,i' \in C(p)$, then $\sum_{j \in A(p) \setminus i}
\frac{B_j(p)}{p} = S(p) = \sum_{j \in A(p) \setminus i'} \frac{B_j(p)}{p}$ and
therefore $B_i(p) = B_{i'}(p)$. Also, clearly, all players with the same budget
will be in the clinching set. The fact that the players clinching have the
largest budget follows directly from Lemma \ref{lemma:supply_inequality}.
\end{proofof}

\section{Proof of Proposition
\ref{prop:payment_monotonicity}}\label{appendix:proof_budget_monotonicity_prop}

\begin{proofof}{Proposition \ref{prop:payment_monotonicity}}

 The first part of the proof consists of showing that
clinching
starts first in the augmented auction. Then we divide the prices in three
intervals: in the first where no clinching happens in both auctions, in the
second where clinching happens only in the augmented auction and the third in
which clinching happens in both auctions. Then we prove the claim in each of
the intervals.\\

{\em First part of the proof:} Clinching starts earlier in the augmented auction

Let $p_0^\base = \min\{p; C^\base(p) \neq \emptyset\}$ and $p_0^\aug = \min\{p;
C^\aug(p) \neq \emptyset\}$. We claim that $p_0^\aug \leq p_0^\base$. In order
to see that, assume the contrary: $p^\base_0 < p_0^\aug$. At $p_0^\base$, there
is one agent $i$ such that $S^\base(p_0^\base) = \sum_{k \in
A(p_0^\base)\setminus i} \frac{B_k^\base(p_0^\base)}{p_0^\base}$. If $p_0^\base
\notin \{v_1, \hdots, v_n\}$, then by Corollary
\ref{cor:clinching_set_after_suddenly_clinching}, no budget was spent in neither
of the auctions at this price and no goods were acquired, so $S^\base(p_0^\base)
= s^\base$, $S^\aug(p_0^\base) = s^\aug$, $B_k^\base(p_0^\base) = B_k^\base(0)$
and $B_k^\aug(p_0^\base) = B_k^\aug(0)$. This implies that at this point
$S^\aug(p_0^\base) > S^\base(p_0^\base) = \sum_{k \in A(p_0^\base)\setminus
i} \frac{B_k^\aug(p_0^\base)}{p_0^\base}$, which contradicts Lemma
\ref{lemma:supply_inequality} for the augmented auction. Now, the case left to
analyze is the one where $p_0^\base = v_j$ for some $j \neq i$ and $i$ entered
the clinching set after acquiring a positive amount of good $\delta_i^\base > 0$
at price $v_j$. Then: $\delta_i^\base = s^\base - \sum_{k \in A(v_j)\setminus i}
\frac{B_k(0)}{v_j} > 0$. But in this case $\delta_i^\aug > 0$,
contradicting that $p_0^\aug > p_0^\base$.\\

{\em Second part  of the proof:} Proof for the first interval $[0,p^\aug_0)$.

For the $p$ in the interval  $[0, p^\aug_0)$, no clinching occurs, so
$B^\aug_i(p) = B^\aug_i(0) = B^\base_i(0) = B^\base_i(p)$.\\

{\em Third part of the proof:} Proof for the second interval $[p^\aug_0, p^\base_0)$.

In the interval $[p^\aug_0, p^\base_0)$, some players are acquiring goods in
the augmented auction but no player is neither acquiring goods nor paying
anything in the base auction, so: $B^\aug_i(p) \leq B^\aug_i(0) = B^\base_i(0) =
B^\base_i(p)$. \\

{\em Fourth part of the proof:} Proof for the third interval $[p^\base_0, \infty)$.

In this interval, both players are clinching. Now, we use the Meta Lemma to
show that for all $p \geq p^\base_0$, the property $B_i^\base(p) \geq
B_i^\aug(p)$ for all $i$ holds. 

For part (a) of the Meta Lemma, we need to show that $B_i^\base(p_0^\base) \geq
B_i^\aug(p_0^\base)$. If $p_0^\base \notin \{v_1,\hdots, v_n\}$ this follows
directly from continuity and the third part of the proof. If $p_0^\base = v_j$
for some $j$, then by the previous cases we know that $B^\aug_i(v_j -) \leq
B^\base_i(v_j -)$. We have that $B^\aug_i(v_j) = B^\aug_i(v_j -) - \delta_i^\aug
v_j$ and $B^\base_i(v_j) = B^\base_i(v_j -) - \delta_i^\base v_j$. Now we
analyze the clinched amounts $\delta_i^\aug$ and $\delta_i^\base$. If
$p_0^\aug = p_0^\base$, it is straightforward to see that $\delta_i^\aug \geq
\delta_i^\base$ and therefore $B^\aug_i(v_j) \leq B^\base_i(v_j)$. So, let's
focus on the case where $p_0^\aug < p_0^\base$. For this case:

\columnsversion{$$\begin{aligned} \delta_i^\aug & = \left[ S^\aug(v_j -) -
\sum_{k \in A(v_j)
\setminus i} \frac{B_k^\aug (v_j -)}{v_j} \right]^+ = \left[ \sum_{k \in A(v_j
-)}
\frac{B_k^\aug (v_j -)}{v_j} - \frac{B_*^\aug(v_j -)}{v_j} - \sum_{k \in A(v_j)
\setminus i}
\frac{B_k^\aug (v_j -)}{v_j} \right]^+ = \\
& = \left[ \frac{B_i^\aug(v_j -)}{v_j} + \frac{B_j^\aug(v_j -)}{v_j} -
\frac{B_*^\aug(v_j -)}{v_j} \right]^+ = \frac{1}{v_j}\left[ \min\{B_i(0),
B_*^\aug (v_j -)\} + 
\min\{B_j(0), B_*^\aug(v_j -)\} - B_*^\aug(v_j -) \right]^+
\end{aligned}$$}{$$\begin{aligned} \delta_i^\aug & = \left[ S^\aug(v_j -) -
\sum_{k \in A(v_j)
\setminus i} \frac{B_k^\aug (v_j -)}{v_j} \right]^+ = \\ & = \left[ \sum_{k \in
A(v_j
-)}
\frac{B_k^\aug (v_j -)}{v_j} -
\frac{B_*^\aug(v_j -)}{v_j} \right. - \\ & \quad \quad - \left. \sum_{k \i n
A(v_j) \setminus i} \frac{B_k^\aug (v_j -)}{v_j} \right]^+ = \\
& = \left[ \frac{B_i^\aug(v_j -)}{v_j} + \frac{B_j^\aug(v_j -)}{v_j} -
\frac{B_*^\aug(v_j -)}{v_j} \right]^+ = \\ & = \frac{1}{v_j}\left[ \min\{B_i(0),
B_*^\aug (v_j -)\} + \right. \\ & \quad \quad \left. +
\min\{B_j(0), B_*^\aug(v_j -)\} - B_*^\aug(v_j -) \right]^+
\end{aligned}$$}
where the last step is an invocation of Corollary
\ref{cor:budget_profile_format}.
For the base auction we have essentially the same, except that $S^\base(v_j -)
\leq \sum_{k \in A(v_j -)} \frac{B_k^\base (v_j -)}{v_j} - \frac{B_*^\base(v_j
-)}{v_j}$ holds as an inequality rather than equality, so we get:
\columnsversion{$$ \delta_i^\base \leq \left[ \frac{B_i^\base(v_j -)}{v_j} +
\frac{B_j^\base(v_j
-)}{v_j} - \frac{B_*^\base(v_j -)}{v_j} \right]^+ = \frac{1}{v_j}\left[
\min\{B_i(0), B_*^\base (v_j -)\} + 
\min\{B_j(0), B_*^\base (v_j -)\} - B_*^\base (v_j -) \right]^+$$}{
$$ \begin{aligned} \delta_i^\base & \leq \left[ \frac{B_i^\base(v_j -)}{v_j} +
\frac{B_j^\base(v_j
-)}{v_j} - \frac{B_*^\base(v_j -)}{v_j} \right]^+ = \\ & = \frac{1}{v_j}\left[
\min\{B_i(0), B_*^\base (v_j -)\} + \right. \\ & \quad \quad  + \left. 
\min\{B_j(0), B_*^\base (v_j -)\} - B_*^\base (v_j -) \right]^+
\end{aligned}$$}

In order to prove that $B^\aug_i(v_j) \leq B^\base_i(v_j)$, we study two cases:

\begin{itemize}
 \item Case A: $B_*^\aug(v_j -) \leq B_j^\aug(0)$, i.e. $B_j^\aug(v_j -) = 
B_*^\aug(v_j -)$. In this case, $\delta_i^\aug
= \frac{B_i^\aug(v_j -)}{v_j}$ and therefore $B_i^\aug(v_j) = 0$, so, it is
trivial that $B^\aug_i(v_j) = 0 \leq B^\base_i(v_j)$.
 \item Case B: $B_*^\aug(v_j -) > B_j^\aug(0)$. Now, consider the
function $\Phi(\beta) = [\min\{\beta,\mu\} + \min\{\beta, \gamma\} - \beta]^+$ 
for $\beta \geq \min\{\mu,\gamma\}$. This function is monotone non-increasing in
such range. Now, take $\mu = B_i(0), \gamma = B_j(0)$ and use that
$B^\aug_*(v_j -) \leq B^\base_*(v_j -)$ to conclude that $\delta_i^\aug =
\Phi(B_*^\aug(v_j -)) \geq \Phi(B_*^\base(v_j -)) \geq \delta_i^\base$. 
This implies $B^\aug_i(v_j) \leq B_i^\base(v_j)$.
\end{itemize}
This finishes the proof of part (a) of the Meta Lemma.\\

Now, for part (b) of the Meta-Lemma, consider two cases:
\begin{itemize}
 \item $B_*^\aug(p) <
B_*^\base(p)$, then by right-continuity of the budget function, there is
some $\epsilon > 0$ such $B_*^\aug(p') < B_*^\base(p')$ for any $p' \in
[p,p+\epsilon)$.
 \item $B_*^\aug(p) = B_*^\base(p)$,
therefore, $B_i^\aug(p) = B_i^\base(p)$ for all $i \in A(p)$, moreover, $S^\aug
(p) = S^\base(p)$, since
\columnsversion{$$S^\aug(p) = \sum_{i \in A(p)}
\frac{B^\aug_i(p)}{p} -
\frac{B_*^\aug (p)}{p}
= \sum_{i \in A(p)} \frac{B^\base_i(p)}{p} - \frac{B_*^\base (p)}{p} =
S^\base(p)$$}{$$\begin{aligned} S^\aug(p) & = \sum_{i \in A(p)}
\frac{B^\aug_i(p)}{p} -
\frac{B_*^\aug (p)}{p}
= \\ & =  \sum_{i \in A(p)} \frac{B^\base_i(p)}{p} - \frac{B_*^\base (p)}{p} =
S^\base(p) \end{aligned}$$}
Since the behavior of the function $B(\cdot)$ for $p' \geq p$ just depends on
$S(p)$ and $B(p)$, for all $p' \geq p$, then for all $p' \geq p$, $B^\aug(p) =
B^\base(p)$. In other words, when $B_*^\base(p)$ and $B_*^\aug(p)$ meet, then
the auctions become {\em fully coupled}.\\
\end{itemize}

Part (c) of the Meta-Lemma is essentially the same argument made in item (a).
This part is trivial for $p \notin \{v_1, \hdots, v_n\}$ by  continuity of
$B(p)$. For $p = v_j$ we use that $B_*^\aug(v_j -) \leq B_*^\base(v_j -)$ and
study $\delta_i^\aug$ and $\delta_i^\base$. As in (c) we get:

\columnsversion{$$  \delta_i^\aug = \frac{1}{v_j}\left[ \min\{B_i(0),
B_*^\aug (v_j -)\} + 
\min\{B_j(0), B_*^\aug(v_j -)\} - B_*^\aug(v_j -) \right]^+ $$
$$ \delta_i^\base = \frac{1}{v_j}\left[ \min\{B_i(0),
B_*^\base (v_j -)\} + 
\min\{B_j(0), B_*^\base(v_j -)\} - B_*^\base(v_j -) \right]^+ $$}{
$$ \begin{aligned} & \delta_i^\aug = \frac{1}{v_j}\left[ \min\{B_i(0),
B_*^\aug (v_j -)\} + \right. \\  & \quad \quad + \left.
\min\{B_j(0), B_*^\aug(v_j -)\} - B_*^\aug(v_j -) \right]^+ \end{aligned} $$
$$ \begin{aligned} & \delta_i^\base = \frac{1}{v_j}\left[ \min\{B_i(0),
B_*^\base (v_j -)\} + \right. \\  & \quad \quad + \left.	
\min\{B_j(0), B_*^\base(v_j -)\} - B_*^\base(v_j -) \right]^+ \end{aligned} $$}
Now, by analyzing cases A and B as in part (a) of the Meta-Lemma, we conclude
that $B_i^\aug(v_j) = B_i^\aug(v_j -) - v_j \delta_i^\aug \leq B_i^\base(v_j -)
- v_j \delta_i^\base = B_i^\base(v_j)$ as desired.
\end{proofof}

\section{Adaptive Clinching Auction with
Repeated Values}\label{appendix:repeated-values}

In this section, we modify the auction given in Definition
\ref{defn:adaptive_clinching_auction} to account for the possibility of
valuation profiles having repeated values, i.e., two agents $i,i'$ with $v_i =
v_{i'}$. The modification is quite simple:

\begin{defn}[Adaptive Clinching Auction revisited]
 Given any valuation vector $v$, budget vector $B$ and initial supply $s$,
consider functions $x_i(p), B_i(p)$ satisfying (i),(ii) and (iii) in Definition
\ref{defn:adaptive_clinching_auction} and also:
\begin{enumerate}
 \item[(iv')] for $p = v_j -$, Let $\tilde{A} = A(v_j -)$, $\tilde{x}_i =
x_i(v_j -)$, $\tilde{B}_i = B_i(v_j -)$ and $\tilde{S} = S(v_j -)$. Now, run
the following procedure on $(\tilde{A}, \tilde{x}, \tilde{B}, \tilde{S})$:
 \begin{itemize}
  \item[$\circ$] while there is $\tilde{j} \in \tilde{A}$ with
$v_{\tilde{j}} = v_j$
 \begin{itemize}
  \item[$\circ$] let $\tilde{j}$ be the lexicographic first of such elements
  \item[$\circ$] remove $\tilde{j}$ from $\tilde{A}$
  \item[$\circ$] define $\delta_k = \tilde{S} - \sum_{k \in \tilde{A}}
\frac{\tilde{B}_k}{v_j}$ for each $k \in \tilde{A}$
  \item[$\circ$] update $\tilde{x}_k = \tilde{x}_k + \delta_k$, $\tilde{B}_k =
\tilde{B}_k - \delta_k v_j$ for each $k \in \tilde{A}$
  \item[$\circ$] update $\tilde{S} = \tilde{S} - \sum_{k \in \tilde{A}}
\delta_k$.
 \end{itemize}
 \end{itemize}
and then set $x_i(v_j) = \tilde{x}_i$, $B_i(v_j) = \tilde{B}_i$.
\end{enumerate}
\end{defn}

Notice this is essentially repeating (iv) in Definition
\ref{defn:adaptive_clinching_auction} for as many times as elements with the
same value $v_j$. Notice that all the proofs in Section
\ref{sec:multi-unit-auctions} carry out naturally for this setting, simply by
repeating the same argument done for (iv) multiple times, showing that the
invariants analyzed are true after each {\em while} iteration.
\section{Algorithmic Form of the Adaptive Clinching
Auction}\label{appendix:explicit}

We presented the Adaptive Clinching Auction in Definition
\ref{defn:adaptive_clinching_auction} as the limit as $p \rightarrow \infty$
of a differential procedure following Bhattacharya el al
\cite{Bhattacharya10}. Here we present the same auction in an algorithmic
format, i.e., an $\tilde{O}(n)$ steps procedure to compute $(x,\pi)$ from
$(v,B,s)$. The
idea is quite simple: given a price $p$ and the values of $B(p), x(p)$,
we solve the differential equation in item (i) of Definition
\ref{defn:adaptive_clinching_auction} and using it, we compute the next point
$\bar{p}$ where either a player leaves the active set, or a player enters the
clinching set. Given that, we compute $B(\bar{p}-), x(\bar{p}-)$. Then we
obtain the values of $B(\bar{p}), x(\bar{p})$ either by the procedure in (iv)
if a player leaves the active set on $\bar{p}$ or simply by taking $B(\bar{p})
= B(\bar{p}-)$ and $x(\bar{p}) = x(\bar{p}-)$ otherwise.

\begin{lemma}
 Consider the functions $x(p)$ and $B(p)$ obtained in the Adaptive Clinching
Auction. If for prices $p' \in [p, \bar{p})$, the clinching and active set are
the same, i.e., $C(p') = C(p)$ and $A(p') = A(p)$, then given $k = \abs{C(p)}$,
the players $i$ in the clinching set are such that:
\begin{itemize}
 \item if $k=1$, $S(p') = \frac{p S(p)}{p'}$, $x_i(p') = x_i(p) + [S(p')-S(p)]$
and $B_i(p') =  B_i(p) + p S(p) [\log p - \log p']$.
 \item if $k>1$, $S(p') = \frac{p^k S(p)}{(p')^k}$, $x_i(p') = x_i(p) +
\frac{1}{k}[S(p')-S(p)]$  and $B_i(p') = B_i(p) + \frac{p^k
S(p)}{k-1} \left[ \frac{1}{{p'}^{k-1}} - \frac{1}{p^{k-1}} \right]$
\end{itemize}
\end{lemma}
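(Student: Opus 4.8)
The plan is to integrate the differential system of item~(ii) of Definition~\ref{defn:adaptive_clinching_auction} explicitly on the ``frozen'' interval $[p,\bar p)$, on which $C(\cdot)=C(p)$ and $A(\cdot)=A(p)$ by hypothesis. The first thing I would establish is that $(p,\bar p)$ contains no value $v_\ell$: if $v_\ell\in(p,\bar p)$ then $\ell\in A(p'')$ for $p''$ slightly below $v_\ell$ but $\ell\notin A(v_\ell)$, contradicting $A(v_\ell)=A(p)\ni\ell$. By item~(iii) the functions $x_i,B_i$ are then continuous on all of $[p,\bar p)$ (right-continuity handles the left endpoint), hence so is $S(\cdot)=s-\sum_i x_i(\cdot)$. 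Therefore, for $i\in C(p)$ the prescribed right-derivatives $\partial_p x_i=S/p$ and $\partial_p B_i=-S$ are \emph{continuous} functions of the price on $[p,\bar p)$ (and they vanish for $i\notin C(p)$), and a continuous function with a continuous right-derivative is differentiable with that derivative; so on this interval we may treat $x_i,B_i,S$ as a bona fide $C^1$ solution of item~(ii) and simply solve the ODE. (Equivalently: write down the closed forms below, verify they have the correct values at $p$ and the correct right-derivatives, and invoke the uniqueness clause of Definition~\ref{defn:adaptive_clinching_auction}.)

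For the integration itself: summing $\partial_p x_i=S/p$ over the $k=\abs{C(p)}$ clinching players and using $\partial_p S=-\sum_i\partial_p x_i$ gives the scalar equation $\partial_p S=-kS/p$; separating variables, $d\log S=-k\,d\log p$, and integrating from $p$ to $p'$ yields $S(p')=S(p)(p/p')^k$, which is the stated formula in both the $k=1$ and $k>1$ cases. For a clinching player $i$ we have $\partial_p x_i=S/p=-\tfrac1k\partial_p S$, so $x_i(p')=x_i(p)+\tfrac1k\bigl(S(p)-S(p')\bigr)$ (the increment being $S(p)-S(p')$ when $k=1$; note the increment must have this sign so that allocations are nondecreasing). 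For the remaining budget, $\partial_p B_i=-S(\rho)=-S(p)p^k/\rho^k$, so when $k=1$ we get $B_i(p')=B_i(p)-S(p)p\int_p^{p'}\rho^{-1}d\rho=B_i(p)+pS(p)(\log p-\log p')$, and when $k>1$ we get $B_i(p')=B_i(p)-S(p)p^k\int_p^{p'}\rho^{-k}d\rho=B_i(p)+\tfrac{p^kS(p)}{k-1}\bigl(\tfrac1{(p')^{k-1}}-\tfrac1{p^{k-1}}\bigr)$, exactly as claimed. Players $i\notin C(p)$ have vanishing right-derivatives throughout $[p,\bar p)$, so their $x_i$ and $B_i$ are constant there, completing the description.

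The only step that is not purely mechanical is the passage from the right-derivative prescription of item~(ii) to an ordinary ODE: that is, ruling out a value point inside $(p,\bar p)$ using the frozen-active-set hypothesis, and then legitimizing the integration either through the ``continuous right-derivative $\Rightarrow$ differentiable'' fact or through guess-and-verify against the uniqueness statement. Once that is in place, everything reduces to one separation of variables for $S$ followed by two elementary integrals for $x_i$ and $B_i$, so I do not expect any real difficulty beyond this justification.
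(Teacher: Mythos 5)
Your proof is correct and follows the same route as the paper's own argument: derive the scalar ODE $\partial_p S = -kS/p$ by summing the clinching players' rates, solve it by separation of variables, and then recover $x_i$ and $B_i$ by two elementary integrations. You also correctly flag that the stated allocation increment $\frac{1}{k}[S(p')-S(p)]$ has the wrong sign (it should be $\frac{1}{k}[S(p)-S(p')]$ since $S$ is non-increasing, a typo the paper's own proof repeats), and you supply a justification the paper leaves implicit, namely that the frozen-active-set hypothesis excludes any value $v_\ell$ from $(p,\bar p)$, so the right-derivative prescription of item~(ii) can legitimately be integrated as an ordinary ODE on that interval.
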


\begin{proof}
 The proof is straightforward. For the case of $k=1$, we
follow the discussion in Example
\ref{defn:adaptive_clinching_auction}: let $i$ be the only player in $C(p)$,
then $S(p') + x_i(p')$ is constant in this range, since all that is subtracted
from the supply is added to the allocation of player $1$, therefore:
$$\partial S(p') = - \partial_p x_i(p') = - \frac{S(p')}{p'} \Rightarrow S(p') =
\frac{\alpha}{p'}.$$
using the boundary condition $S(p) = \frac{\alpha}{p}$, we get the value of
$\alpha = p S(p)$. Now, clearly $x(p') = x(p) + [S(p')-S(p)]$, since player $i$
is the only one clinching. For his budget:
\columnsversion{$$B_i(p') - B_i(p) = \int_p^{p'} \partial_p B_i (\rho) d\rho = 
\int_p^{p'}
-S(\rho) d\rho = \int_p^{p'} -\frac{p S(p)}{\rho} d\rho = p S(p)  [\log p -
\log p']$$}{
$$\begin{aligned} B_i(p') - B_i(p) & = \int_p^{p'} \partial_p B_i (\rho) d\rho =
 \int_p^{p'} -S(\rho) d\rho = \\ & =  \int_p^{p'} -\frac{p S(p)}{\rho} d\rho = p
S(p) [\log p -
\log p'] \end{aligned}$$}
For $k>1$, $S(p') + \sum_{i \in C(p')} x_i(p')$ is constant and therefore:
$$\partial S(p') = - \sum_{i \in C(p') }\partial_p x_i(p') = - k
\frac{S(p')}{p'} \Rightarrow S(p') =
\frac{\alpha}{(p')^k}.$$
Using the boundary condition $S(p) = \frac{\alpha}{p^k}$, we get the value of
$\alpha = p^k S(p)$. We use the observation in Lemma
\ref{lemma:budget_profile_format_pre} that players in the clinching set have
the same budget, and therefore the auction treats them equally from this point
on as long as they remain in the active set, i.e., they will get allocated and
charged at the same rate. Therefore: $x(p') = x(p) + \frac{1}{k}[S(p')-S(p)]$.
For the budgets:
\columnsversion{ $$B_i(p') - B_i(p) = \int_p^{p'}
-S(\rho) d\rho = \int_p^{p'} -\frac{p S(p)}{\rho^k} d\rho = \frac{p^k S(p)}{k-1}
\left[ \frac{1}{(p')^{k-1}} -\frac{1}{p^{k-1}} \right] $$}{
$$\begin{aligned} B_i(p') - B_i(p) & = \int_p^{p'}
-S(\rho) d\rho = \int_p^{p'} -\frac{p S(p)}{\rho^k} d\rho = \\ & =\frac{p^k
S(p)}{k-1} \left[ \frac{1}{(p')^{k-1}} -\frac{1}{p^{k-1}} \right]
\end{aligned}  $$ }
\end{proof}

\begin{theorem}[Algorithmic Form]\label{thm:algorithmic_format}
 It is possible to compute the allocation and payments of the Adaptive
Clinching Auction in $\tilde{O}(n)$ time.
\end{theorem}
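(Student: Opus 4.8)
The plan is to simulate the differential procedure of Definition~\ref{defn:adaptive_clinching_auction} as an event-driven process, using the closed-form solution of the preceding lemma to jump from one breakpoint to the next in $O(1)$ arithmetic operations per jump, and bounding the number of breakpoints by $O(n)$. A \emph{breakpoint} is a price at which either (1)~some active player $i$ reaches $p=v_i$ and leaves the active set, triggering step~(iv), or (2)~some player enters the clinching set. Event~(1) happens at most $n$ times. By Lemma~\ref{lemma:once_clinching_always_clinching}, once a player enters the clinching set he stays there until he becomes inactive, so event~(2) also happens at most $n$ times; hence there are $O(n)$ breakpoints in total, and it suffices to process each in amortized $O(\log n)$ time.

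Between two consecutive breakpoints the active set $A$ and clinching set $C$ are fixed, and the preceding lemma gives $S(p')$, $x_i(p')$ and $B_i(p')$ in closed form in terms of $k=\abs{C}$ and the state at the start of the interval. To find the next breakpoint I would compute, and take the minimum of: (a)~the smallest value $v_i$ over $i\in A$, read off a priority queue on values; and (b)~the first price at which a non-clinching active player joins $C$. For~(b), note that by Corollary~\ref{cor:budget_profile_format} every non-clinching active player has $B_i(p)=B_i(0)$ while every clinching player has $B_i(p)=B_*(p)$; combining this with the identity $S(p)=\sum_{j\in A}\frac{B_j(p)}{p}-\frac{B_*(p)}{p}$ (valid when $C\neq\emptyset$) one checks that the clinching condition $S(p)=\sum_{j\in A\setminus i}\frac{B_j(p)}{p}$ for a non-clinching $i$ is equivalent to $B_*(p)=B_i(0)$. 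Since $B_*(p)$ is monotonically decreasing on the interval, the next player to join $C$ is the non-clinching active player of largest initial budget (available from a max-heap on budgets), and the price at which he joins is obtained by inverting the closed-form expression for $B_*(p')$ from the preceding lemma --- a constant-time computation. When $C$ is still empty, the analogous computation finds the active player of largest budget directly from the equation $S(p)=\sum_{j\in A\setminus i}\frac{B_j(0)}{p}$.

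Carrying out the state updates in amortized $O(\log n)$ per event requires a compact representation of the allocation and budget vectors rather than updating every coordinate. Using Corollary~\ref{cor:budget_profile_format} and Lemma~\ref{lemma:budget_profile_format_pre}, the clinching players always share a common remaining budget $B_*(p)$ and all gain allocation at the common rate $\frac{S(p)}{p}$; so I would maintain global accumulators $G(p)=\int \frac{S(\rho)}{\rho}\,d\rho$ and $H(p)=\int -S(\rho)\,d\rho$, updated in $O(1)$ at each breakpoint from the closed forms, and store for each player the offset between his true allocation/budget and the accumulator value at the moment he joined $C$, so that his final allocation and payment are recovered in $O(1)$ when he becomes inactive. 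An entering-clinching event then just pops one player off the budget max-heap, and step~(iv) reduces to popping a prefix of players and recomputing the new common budget $B_*$; since a player is popped at most once (once clinching, always clinching), the total cost of all heap work is $O(n\log n)$.

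The hard part will be implementing step~(iv) --- and especially its generalization~(iv') for repeated values, where several players sharing a value $v_j$ leave the active set at the same price --- in amortized near-linear time, since the naive description re-touches every remaining active coordinate at each sub-step and would cost $\Theta(n^2)$ in the worst case. The resolution is exactly the compact-representation idea above: at such a breakpoint one determines the new common clinching budget $\beta^\ast$ as the solution of a one-dimensional supply-conservation equation, identifies the newly clinching players as those non-clinching active players whose initial budget exceeds $\beta^\ast$ (popped once and for all from the max-heap), and updates allocations through the global accumulator; the simultaneous departure of several tied players is handled by iterating this constant-amortized-work step, which is correct because, as noted after Definition~(iv'), each iteration preserves all the structural invariants (Lemmas~\ref{lemma:once_clinching_always_clinching}, \ref{lemma:budget_profile_format_pre} and Corollary~\ref{cor:budget_profile_format}) used above. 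Summing: $O(n\log n)$ for the initial sort and all priority-queue operations, plus $O(1)$ amortized arithmetic work at each of the $O(n)$ breakpoints, gives $\tilde{O}(n)$ time overall.
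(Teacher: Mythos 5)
Your proposal follows the same event-driven simulation strategy as the paper's proof: a breakpoint is either a player leaving the active set (type (a)) or a player entering the clinching set (type (b)); by Lemma~\ref{lemma:once_clinching_always_clinching} there are $O(n)$ of each; and between consecutive breakpoints the closed-form solution of the preceding lemma advances the state in $O(1)$ arithmetic. The paper's proof stops roughly there, leaving implicit that each breakpoint can be processed in near-constant amortized time. You correctly notice that a literal implementation of step~(iv) (and especially (iv')) re-touches every active coordinate, which would be $\Theta(n)$ per breakpoint and $\Theta(n^2)$ overall, and you supply the missing ingredients: the identification (via Corollary~\ref{cor:budget_profile_format} and the identity $S(p)=\sum_{j\in A}B_j(p)/p-B_*(p)/p$) that a non-clinching player $i$ joins $C$ exactly when $B_*(p)$ falls to $B_i(0)$, so the next type-(b) event is found from a max-heap over non-clinching initial budgets; plus a lazy, accumulator-with-offsets representation of $(x,B)$ exploiting that all clinching players share the common budget $B_*(p)$ and clinch at a common rate. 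That representation makes each player's coordinate touched $O(1)$ times over the whole run, giving the advertised $O(n\log n)\subseteq\tilde{O}(n)$. In short: same algorithmic skeleton, but you have made explicit the data-structure argument the paper glosses over, which is a genuine strengthening of the write-up rather than a different route.
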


\begin{proof}
 Using the lemma above, we just need  to compute $x$ and $B$ for the points where
one of the following events happen: (a) one leaves the active set and (b) one
player enters the clinching set. Clearly there are at most $n$ events of type
(a) and by Lemma \ref{lemma:once_clinching_always_clinching} also at most $n$
events of type (b).

 The algorithm starts at price $p=0$ and at each time computes the next
event. For example, at price $p=0$, the next event of type (a) occurs in
$p=\min_i v_i$. The next event of type (b) occurs at price $p = \frac{1}{s}
[\sum_A B_i - \max_A B_i]$ if no event of type (a) happens before. First we
compute which one occurs first.  Let $\bar{p}$ be such a price. Then, computing
$B(\bar{p}-), x(\bar{p}-)$ is trivial, since no clinching happened so far, so
at that price: $B(\bar{p}-) = B$ (initial budgets) and $x(\bar{p}-) = 0$. Now,
if $\bar{p}$ is an event of type (a), then use step (iv) in Definition
\ref{defn:adaptive_clinching_auction} to compute $x(\bar{p}), B(\bar{p})$. If
not, simply take $B(\bar{p})
= B(\bar{p}-)$ and $x(\bar{p}) = x(\bar{p}-)$.

From this point on, at each considered price $p$, the clinching set will be
non-empty, so we know the format of $x(p')$ and $B(p')$ for $p' \in
[p,p+\epsilon)$. If the next event that happens is of type (a), it happens at
$\min\{v_j; v_j > p\}$, if it is of type (b), it happens at $\min \{p'; B_*(p')
= \max_{i' \in A(p') \setminus C(p')} B_{i'} \}$, where the expression for
$B(p')$ is given in the previous lemma. For example, if $\abs{C(p)} = 1$, then
this happens at:
$$p' = \exp\left[ \frac{1}{p S(p)} (B_*(p) - \textstyle\max_{i' \in
A(p)\setminus C(p)} B_{i'}(p) ) + \log p\right]$$
and if $\abs{C(p)} = k > 1$, it happens at:
$$p' = \frac{p^k S(p)}{(k-1)(B_*(p) - \textstyle\max_{i' \in A(p)\setminus
C(p)} B_{i'}(p)) + p S(p)}$$
Those expressions are easily obtained by taking $B_i(p')$ as calculated in the
previous lemma and calculating for which $p'$ it becomes equal to $\max_{i' \in
A(p) \setminus C(p)} B_i'(p)$.
Now, we simply need to find out which of those events happen first. Let it be
$\bar{p}$, then we compute $B(\bar{p}-), x(\bar{p}-)$ using the previous lemma
and then update to $B(\bar{p}), x(\bar{p})$ as described above.
\end{proof}

\section{Details of the Qualitative Description}\label{appendix:closed_form_2}

In Theorem \ref{thm:main_theorem_multi_units}, we showed that the Adaptive
Clinching Auction is an auction in the online supply model. It is natural
to ask how allocation and payment qualitatively evolve as supply arrives. In
order words, how does it allocate and charge for an $\epsilon$ amount of the
good when it arrives after $s$ supply has already been allocated. We perform
in this section a qualitative analysis for two players based on the explicit
formula of the Adaptive Clinching Auction for $n=2$ players derived in
Dobzinski et al \cite{dobzinski12} (which can  alternatively be obtained from
Theorem \ref{thm:algorithmic_format}). This is reproduced in Figure 
\ref{fig:explicit_ACA_n2}.

As we will see, the clinching auction has a very natural behavior for the first
few units of supply that arrive - it simply allocates them using VCG. At a
certain point, when budget constraints start to kick-in, the allocation and
payments evolve in a quite non-expected way. The main goal of this section is
to highlight this point.

Depending on the relation between
$v_1$ and $v_2$, two distinct behaviors can happen: either at a certain point,
the high-value player gets his budget depleted, and the auction starts
allocating new arriving units to the low-value player (still charging for
those items) and then at a certain
further point, it starts splitting the goods among them (charging only the player
with non-depleted budget). Or alternatively, the
auction continues to allocate to the high-value player, but start charging him
a discounted version of VCG. Then when his budget gets depleted, the auction
starts splitting goods among the players (charging only the player with
non-depleted budget).

\columnsversion{\begin{figure}[h]}{\begin{figure*}[h]}
\centering
\begin{tabular}{|c|l|}
\hline
\multirow{2}{*}{ $v_2 \geq v_1 \text{ and } s \cdot v_1 \leq B_2$}
& $x = \left(0, s\right)$ \\
& $p = \left(0, s v_1 \right)$ \\
\hline
\multirow{2}{*}{ $v_2 \geq v_1 \text{ and } B_2 \leq s \cdot v_1 \leq B'_1$}
& $x = \left(s-\frac{B_2}{v_1}, \frac{B_2}{v_1} \right)$ \\
& $p = \left(  B_2 (\log (s v_1) - \log B_2) , B_2\right)$
\\
\hline
\multirow{2}{*}{ $v_2 \geq v_1 \text{ and } B'_1 \leq s \cdot v_1 $}
& $x = \left( s-\frac{s B_2}{2 B'_1} [ 1+ ( \frac{B'_1}{s v_1} )^2
],\frac{s B_2}{2 B'_1} [ 1+ ( \frac{B'_1}{s v_1} )^2
] \right)$ \\
& $p = \left( B_2 (1-\frac{B'_1}{s v_1}) + B_2 (\log B'_1 - \log B_2) ,
B_2\right)$
\\
\hline
\multirow{2}{*}{ $v_2 < v_1 \text{ and }  s \cdot v_2 \leq B_2$}
& $x = \left(s, 0\right)$ \\
& $p = \left(s v_2, 0 \right)$ \\
\hline
\multirow{2}{*}{ $v_2 < v_1 \text{ and } B_2 \leq s \cdot v_2 \leq B'_1$}
& $x = \left(s, 0\right)$ \\
& $p = \left(B_2 + B_2 (\log (s v_2) - \log B_2), 0 \right)$ \\
\hline
\multirow{2}{*}{ $v_2 < v_1 \text{ and } B'_1 \leq s \cdot v_2$}
& $x = \left( s+\frac{ s B_2}{2 B'_1} [-1+(\frac{B'_1}{s v_2})^2], \frac{s
B_2}{2 B'_1}
[1-(\frac{B'_1}{s v_2})^2] \right)$ \\
& $p = \left(B_1, B_2  - \frac{B'_1 B_2}{s v_2}  \right)$ \\
\hline 
\end{tabular}\\
\caption{Explicit formula of the Adaptive Clinching Auction for
$n=2$. \\
Valuations $v_1, v_2$, budgets $B_1 \geq B_2$, initial supply $s$ and $B'_1 =
\exp(\frac{B_1}{B_2}-1 + \log B_2)$ }
\label{fig:explicit_ACA_n2}
\columnsversion{\end{figure}}{\end{figure*}}

\columnsversion{\begin{figure}}{\begin{figure*}}
\centering
\includegraphics{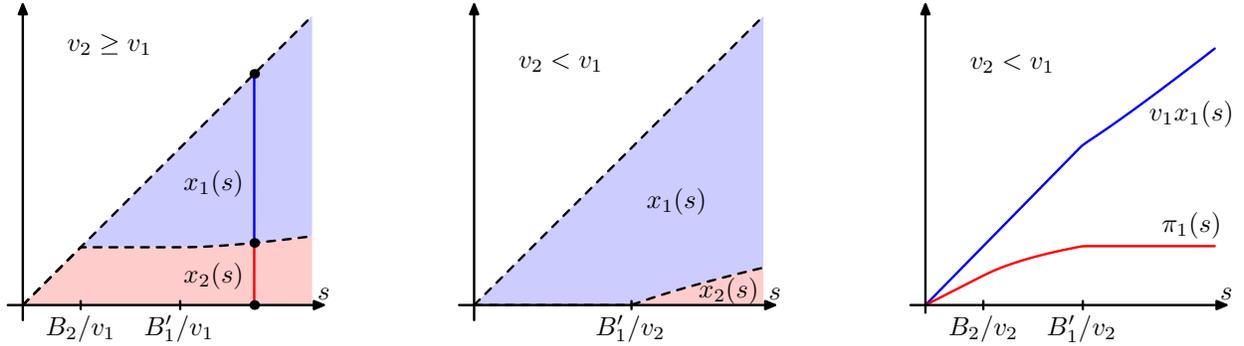}
\caption{Evolution of allocations $x_1$ and $x_2$ as supply arrives in the
cases $v_2 \geq v_1$ (first plot) \\ and $v_2 < v_1$ (second plot). The third
plot represents the evolution of value $v_1 x_1(s)$ and \\ payment $\pi_1(s)$ of
player $1$ as more supply arrives in the case $v_2 < v_1$.}
\label{fig:interpretation_1}
\columnsversion{\end{figure}}{\end{figure*}}

Now, for some fixed pair of valuations $v_1, v_2$ and budgets $B_1 \geq B_2$, we
study how the allocation and payments evolve with the available supply $s$. As
one can easily see, if
supply is sufficiently small $s \leq \frac{\min_i B_i}{\min_i v_1}$, the
auction is essentially VCG, as the good arrives, it  allocates it to the highest-value 
player and she pays per unit the value of the other player. The behavior
afterwards depends on the relation between $v_1$ and $v_2$.

\textbf{Case $\mathbf{v_2 \geq v_1}$ :} As goods arrive, we allocate to the
higher-value player charging him $v_1$ per unit. This is essentially VCG. This
is possible until $s = \frac{B_2}{v_1}$, when the budget of $2$ is depleted.
As more goods arrive, we allocate them entirely to player $1$ charging them at
a rate proportionally to $\frac{B_2}{s}$, i.e., the fraction between player
$2$'s original budget and supply that has arrived so far. We continue doing
that until $s = \frac{1}{v_1} B'_1 = \frac{1}{v_1}\exp(\frac{B_1}{B_2}-1+\log
B_2)$. At this point, the remaining budget of player $1$ is the same as the
original budget of player $1$. From this point on, each amount of the good that
arrives is split among $1$ and $2$ at a rate $\partial_s x = (1-\frac{B_2}{2
B'_1} + \frac{B_2 B'_1}{2 v_1^2 s^2},\frac{B_2}{2
B'_1} - \frac{B_2 B'_1}{2 v_1^2 s^2})$. Player $2$ is clearly not charged
(since his budget is already depleted) and player $1$ is charged at a rate
$\frac{B'_1 B_2}{s^2 v_1}$ per arriving unit of $s$. (The allocation is
depicted in the first part of Figure \ref{fig:interpretation_1}.)

\textbf{Case $\mathbf{v_2 < v_1}$ :} Again we start allocating like VCG, i.e.,
allocating the goods to player $1$ and charging him $v_2$ for each unit of the
good. We do that up to supply $s = \frac{B_2}{v_2}$. From this point on, we
continue allocating incoming goods to player $1$, but we charge him at a
cheaper rate than $v_2$, precisely, we charge him at a rate $\frac{B_2}{s}$. We
do so until the budget of $1$ is depleted, which happens at $s =
\frac{B'_1}{v_2}$. From this point on, we split the arriving goods between
players $1$ and $2$ at a rate $\partial_s x = (1-\frac{B_2}{2
B'_1} - \frac{B_2 B'_1}{2 v_1^2 s^2},\frac{B_2}{2
B'_1} + \frac{B_2 B'_1}{2 v_1^2 s^2})$. Naturally, we cannot charge player $1$,
because his budget is already depleted, but we charge player $2$ at a rate
$\frac{B'_1 B_2}{s^2 v_2}$ per arriving unit of $s$. (The allocation is
depicted in the second and third part of Figure \ref{fig:interpretation_1}.)\\

\paragraph{Relation to bid throttling.} One important remark is that the auction
is {\em not} a special case of a bid-throttling scheme, since an agent is
allocated items even after his budget is completely depleted and even if the
other agent still has budget left. Intuitively, the fact that an agent got many
items for an expensive price in the beginning (once the items were scarce) gives
him an advantage over items in the future if/once they become abundant, i.e.,
this agent will have the possibility of acquiring these items for a lower price
compared to other agents.

The final goal of this research direction is to provide better simple heuristics
to deal with budgets for real-world ad auctions. We believe that the
qualitative analysis
above hints to new heuristics to manage budget constrained agents.
We illustrate the effectiveness of such heuristic in the following scenario:
consider a set of advertisers competing for ad slots on queries for
a highly volatile query, say `sunglasses'.  If weather is rainy, there
are very few queries and budget constraints do not kick in, but if weather is
sunny, there is a high volume of queries and we would like to split the queries
to advertisers according to their budgets. Imagine now that the weather is
completely unpredictable. If the day starts rainy, very few queries arrive
in the morning. It is unclear is the weather is changing in the afternoon.
If the search engine knew the weather in the afternoon, it could run second-price 
auctions on modified bids to ensure that the high value players still
have sufficient budget left in the afternoon if it is sunny and would run
second price on real bids if they knew the weather would still be rainy.
The heuristic proposed by clinching on the other hand, is completely agnostic
to that matter. It will allocate using VCG in the beginning. If the weather
becomes sunny, it will  give items for cheaper for the high-valued players
to compensate for the more expensive items acquired in the beginning.

%\columnsversion{}{
%\balancecolumns
%}
% That's all folks!
\end{document}